\theoremstyle{plain}
\newtheorem{prop}{Proposition}[section]
\newtheorem{lem}[prop]{Lemma}
\newtheorem{cor}[prop]{Corollary}
\newtheorem{thm}[prop]{Theorem}
\newtheorem*{prop*}{Proposition}
\newtheorem*{lem*}{Lemma}
\newtheorem*{sublem*}{Sublemma}
\newtheorem*{cor*}{Corollaire}
\newtheorem*{thm*}{Theorem}
\newtheorem*{hypo*}{Hypothesis}
\newtheorem*{question*}{Question}
\newtheorem*{conjecture*}{Conjecture}
\newtheorem*{scholum*}{Scholum}
\newtheorem{defn}[prop]{Definition}
\newtheorem*{defn*}{Definition}
\newtheoremstyle{slanted}
  {3pt}
  {3pt}
  {\slshape}
  {}
  {\bfseries}
  {.}
  {.5em}
  {}
\theoremstyle{slanted}
\newtheorem{ex}[prop]{Example}
\newtheorem*{ex*}{Example}
\newtheorem*{exs*}{Examples}
\newtheorem{rmk}[prop]{Remark}
\newtheorem*{rmk*}{Remark}
\newtheorem*{rmks*}{Remarks}
\newtheorem*{notation*}{Notation}
\theoremstyle{definition}
\newtheorem*{con*}{Construction}
\newtheorem*{note*}{Note}
\theoremstyle{remark}
\newtheorem*{warning*}{Warning}
\newtheorem*{shortnote*}{Note}
\newtheorem*{claim*}{Claim}
\newtheorem*{axiom*}{Axiom}
\newcommand{\CC}{\mathbb{C}}
\newcommand{\reals}{\mathbb{R}}
\newcommand{\ZZ}{\mathbb{Z}}
\newcommand{\thalf}{\tfrac{1}{2}}
\newcommand{\smalcirc}{\mbox{\tiny{$\circ$}}}
\newcommand{\ba}[2]{[#1,#2]}
\newcommand{\bas}[2]{[#1,#2]_*}
\newcommand{\lon }{\,\rightarrow\,}
\newcommand{\abs}[1]{\left\vert#1\right\vert}
\newcommand{\pairing}[2]{\left\langle #1  |  #2 \right\rangle}
\newcommand{\inserts}{\iota}
\newcommand{\set}[1]{\left\{#1\right\}}
\newcommand{\Poissonbracket}[1]{\left \{ #1\right \}}
\newcommand{\thetaalgebra}{\mathfrak{\theta}}
\newcommand{\abelianideal}{\mathfrak{I}}
\newcommand{\thetaalgebrastar}{\mathfrak{\theta}^*}
\newcommand{\galgebra}{\mathfrak{g}}
\newcommand{\galgebrastar}{\mathfrak{g}^*}
\newcommand{\crossedmodulealgebra}{\galgebra\ltimes  \thetaalgebra}
\newcommand{\unit}{{\bf{1}}}
\newcommand{\adjoint}[1]{\mathrm{ad}_{#1}}
\newcommand{\coadjoint}[1]{\mathrm{ad}^*_{#1}}
\newcommand{\phiUprotate }{\phi^{\scriptscriptstyle T}}
\newcommand{\Id}{{\bf{1}}}
\newcommand{\ddelta}{    { \delta}}
\newcommand{\skewdelta}{{\delta}}
\newcommand{\domega}{  { \omega}}
\newcommand{\dpi}{  { \pi}}
\newcommand{\crossedmoduletriple}[3]{(#1\stackrel{#2}{\rightarrow}#3)}
\newcommand{\Lietwo}[2]{ {#2 \ltimes #1}}
\newcommand{\moduleaction}{ \triangleright}
\newcommand{\phipush}{D_{\phi}}
\newcommand{\minuspower}[1]{(-1)^{#1}}
\newcommand{\thetaalgebradegone}{\mathrm{\theta}}
\newcommand{\thetaalgebrastardegminustwo}{\mathrm{\theta}^*}
\newcommand{\galgebradegzero}{\mathfrak{g}}
\newcommand{\galgebrastardegminusone}{\mathfrak{g}^*}
\newcommand{\phimap}{\phi}
\newcommand{\bracketmapbracket}[2]{[#1,#2]}
\newcommand{\actionmapaction}[2]{#1 \succ #2}
\newcommand{\homotopymap}{h}
\newcommand{\cohomotopymap}{\tilde{\eta}}
\newcommand{\symmetricproduct}{{\scriptstyle \odot}\,}
\newcommand{\symmetricalgebra}{\mathcal{S}^{\bullet}}
\newcommand{\derivedby}[1]{ {D}_{#1}}
\newcommand{\Sbullet}{S^\bullet}
\newcommand{\tobefilledin}{\,\stackrel{\centerdot}{}\,}
\newcommand{\Linf}{L_{\infty}}
\newcommand{\degreesubspace}[2]{{#1}^{(#2)}}
\newcommand{\Vs}{V^*}
\newcommand{\US}{\mathfrak{S}}
\newcommand{\Koszul}[1]{\epsilon(#1)}
\newcommand{\shiftby}[2]{{#1}{\scriptstyle{[#2]}}}
\newcommand{\ipG}[2]{(#1,#2)^{\galgebra}}
\newcommand{\plankconstant}{\hbar}
\newcommand{\derivesh}{\varepsilon^{\scriptscriptstyle 03}_{\scriptscriptstyle 10}}
\newcommand{\derivesaction}{\varepsilon^{\scriptscriptstyle 01}_{\scriptscriptstyle 11}}
\newcommand{\derivesbracket}{\varepsilon^{\scriptscriptstyle 12}_{\scriptscriptstyle 00}}
\newcommand{\derivesvarphi}{\varepsilon^{\scriptscriptstyle  10}_{\scriptscriptstyle 01}}
\newcommand{\derivescoh}{{\varepsilon^{\scriptscriptstyle 01}_{\scriptscriptstyle 30}}}
\newcommand{\derivescoaction}{\varepsilon^{\scriptscriptstyle 11}_{\scriptscriptstyle 10}}
\newcommand{\derivescobracket}{\varepsilon^{\scriptscriptstyle 00}_{\scriptscriptstyle 21}}
\newcommand{\derivescovarphi}{\varepsilon^{\scriptscriptstyle 10}_{\scriptscriptstyle 01}}
\newcommand{\CP}{\mathrm{~c.p.~}}
\newcommand{\el}{l}
\begin{document}

\title{Weak Lie 2-Bialgebras\footnote{Research partially supported by NSF grants DMS-0605725, DMS-0801129, DMS-1101827 and NSFC grant 11001146.}}
\author{\textsc{Zhuo Chen} \\
{\small Department of Mathematics, Tsinghua University} \\
{\small
\href{mailto:zchen@mail.math.tsinghua.edu.cn}
{\texttt{zchen@math.tsinghua.edu.cn}}}\and
\textsc{Mathieu Sti\'enon} \\
{\small Department of Mathematics, Pennsylvania State University} \\
{\small
{\href{mailto:stienon@math.psu.edu}{\texttt{stienon@math.psu.edu}}}}
\and \textsc{Ping Xu}
\\
{\small Department of Mathematics, Pennsylvania State University} \\
{\small
 {\href{mailto:ping@math.psu.edu}{\texttt{ping@math.psu.edu}}}
 } }

\date{}

\maketitle
\begin{abstract}
We introduce the notion of weak Lie 2-bialgebra.
Roughly, a weak Lie 2-bialgebra is a pair of compatible 2-term $L_\infty$-algebra structures
on a vector space and its dual. The compatibility condition is described in terms of the big bracket. We prove that (strict) Lie 2-bialgebras are in one-one correspondence
with crossed modules of Lie bialgebras.
\end{abstract}

\tableofcontents

\section{Introduction}

The main purpose of the paper is to develop the notion of weak Lie 2-bialgebras.
A Lie bialgebra is a Lie algebra endowed with a compatible Lie coalgebra structure.
Lie bialgebras can be regarded as the classical limits of quantum groups.
A celebrated theorem of Drinfeld establishes a bijection between Lie bialgebras
and connected, simply connected Poisson Lie groups.
Poisson 2-groups \cite{CSX} are a natural first step in the search for an appropriate notion
of quantum 2-groups, which can be considered as deformation quantization of 
ordinary  Lie  2-groups. Their infinitesimal counterparts are called Lie 2-bialgebras 
or crossed modules of Lie bialgebras.


Recall that a Lie algebra crossed module consists of a pair of Lie algebras
$\thetaalgebra$ and $\galgebra$ together with a linear map
$\phi:~\thetaalgebra\to\galgebra$ and an action of $\galgebra$ on
$\thetaalgebra$ by derivations satisfying a certain compatibility condition.
A Lie bialgebra crossed module  is  a pair of Lie algebra crossed modules
in duality: $\crossedmoduletriple{\thetaalgebra}{\phi}{\galgebra}$ and
$\crossedmoduletriple{\galgebrastar}{-\phi^*}{\thetaalgebrastar}$
are both Lie algebra crossed modules, and
$(\Lietwo{\thetaalgebra}{\galgebra},\Lietwo{\galgebrastar}{\thetaalgebrastar})$
is a Lie bialgebra.

It is well known that Lie algebra crossed modules are a special
case of weak Lie 2-algebras (i.e.\ two-term $L_\infty$ algebras) \cite{MR2068522}.
It is natural to ask  what is a weak Lie 2-bialgebra.
Such an object ought to be a weak Lie 2-algebra as well as a weak Lie
2-coalgebra, both structures being compatible with one another in a certain sense.
Several notions of $L_\infty$ bialgebras can be found in the existing literature,
among which we can mention
Kravchenko's homotopy Lie bialgebras \cite{MR2327020} and Merkulov's
homotopy Lie[1] bialgebras \cite{MR2600029}.
However, none of them serves our purpose.
Although it is a weak Lie 2-algebra, a two-term homotopy Lie bialgebra in the sense of Kravchenko is, for instance, not a weak Lie 2-coalgebra due to the degree convention   {(see Remark \ref{rmk:different notions})}.
To obtain the correct compatibility condition, it turns out that one must
shift the degree on the underlying $\ZZ$-graded vector space $V$ so as to modify the ``big bracket'', which is a Gerstenhaber bracket on ${\Sbullet}(\shiftby{V}{2}\oplus
{\shiftby{\Vs}{1}})$.
Identifying ${\Sbullet}(\shiftby{V}{2}\oplus{\shiftby{\Vs}{1}})$
with the space $\Gamma_{ }(\wedge^\bullet\shiftby{T}{4}M)$
of  polyvector fields on $M=\shiftby{\Vs}{-2}$,
the big bracket can be simply described as the Schouten bracket of multivector fields.
\emph{In terms of the big bracket, a weak Lie 2-bialgebra on a graded vector space $V$
is a degree $(-4)$ element $\varepsilon$ of ${\Sbullet}(\shiftby{V}{2}\oplus{\shiftby{\Vs}{1}})$
such that $\{\varepsilon, \varepsilon\}=0$.}
 Lie 2-bialgebras arise as a special case of weak Lie 2-bialgbras where certain homotopy terms vanish.
Our main theorem establishes a bijection between   Lie 2-bialgebras
and crossed modules of Lie bialgebras.

This is the first of a series of papers devoted to the study
of Poisson 2-groups \cite{CSX} and their quantization.
We are grateful to the organizers of ``Journ\'ee Quantique'' (June 2010),
``WAGP 2010'' (June 2010), and ``Poisson 2010'' (July 2010),
where we had the pleasure to present our results.
Since then drafts of this work and slides of our conference talks have circulated in the community.  
{Some of the results presented here were reproduced in an arXiv preprint posted in 2011 \cite{BSZ}.}
We would like to thank several institutions for their hospitality
while work on this project was being done: Penn State University (Chen),
Universit\'e du Luxembourg (Chen,  Sti\'enon and Xu),
Institut des Hautes \'Etudes Scientifiques
and Beijing International Center for Mathematical Research (Xu).
We would also like to thank Anton Alekseev, Benjamin Enriques, 
Yvette Kosmann-Schwarzbach, Henrik Strohmayer, Jim Stasheff, and Alan 
Weinstein for useful discussions and comments.  
We are grateful to the anonymous referee for carefully reading this paper.

Some notations are in order.

\textsc{Notations:}  {In this paper, all vector spaces are assumed to be finite dimensional.} Given a graded vector space $V=\bigoplus_{k\in\ZZ}V^{(k)}$, $V[i]$ denotes the graded
vector space obtained by shifting the grading on $V$ according to the rule
$(V[i])^{(k)}=V^{(i+k)}$, and $V^*$ denotes the dual vector space, which is graded according to the rule $(V^*)^{(-k)}=(V^{(k)})^*$. Note in particular that $(V[i])^*=(V^*)[-i]$.
We write $\abs{e}$ for the degree of a homogeneous vector $e\in V$.
The symbol $\symmetricproduct$ is used for the symmetric tensor product: for any homogeneous vectors $e,f\in V$,
\[ e\symmetricproduct f=\tfrac{1}{2}(e\otimes f+(-1)^{|e| |f|}  f\otimes e) .\]
The symmetric algebra over $V$ will be denoted by ${\Sbullet}(V)$.

\section{Lie 2-bialgebras}
\subsection{The big  bracket}
\label{Schoutenbracket}

We will introduce a graded version  of the  big bracket \cites{MR2103012, MR1046522}
involving graded vector spaces.

Let $V=\bigoplus_{k\in\ZZ} \degreesubspace{V}{k}$ be a $\ZZ$-graded vector
space.  Consider the   $\ZZ$-graded manifold  $M=\shiftby{\Vs}{-2}$ and
the shifted  tangent space
\[ \shiftby{T}{4}M\cong \shiftby{(M\times \shiftby{\Vs}{-2})}{4}\cong M\times{\shiftby{\Vs}{2}} .\]

Consider the space of polyvector fields on $M$ with polynomial coefficients:
\begin{eqnarray*}  \Gamma_{ }(\wedge^\bullet
\shiftby{T}{4}M)\cong \Sbullet(M^*)\otimes   \Sbullet  (\shiftby{(\shiftby{\Vs}{2})}{-1}) \cong
 \Sbullet(\shiftby{V}{2})\otimes
\Sbullet(\shiftby{\Vs}{1})\cong  {\Sbullet}(\shiftby{V}{2}\oplus
{\shiftby{\Vs}{1}}).
\end{eqnarray*} 
{Here, elements of $\shiftby{\Vs}{-2}$, when thought as having degree 1  more than their actual degrees in $\shiftby{\Vs}{-2}$, are exactly the sections of $TM$ which are constant along $M$, i.e., the vector fields on $M$ invariant under translation.}

In the sequel, let us denote
${\Sbullet}(\shiftby{V}{2}\oplus
{\shiftby{\Vs}{1}})$ by $\symmetricalgebra$. The symmetric tensor product on
 $\symmetricalgebra$ will be denoted by $\symmetricproduct$.

There is a standard way to endow $\symmetricalgebra=\Gamma_{ }(\wedge^\bullet\shiftby{T}{4}M)$ with a graded Lie bracket, i.e. the Schouten bracket,
 denoted by $\Poissonbracket{\tobefilledin,\tobefilledin}$. It is  a
 bilinear map
 $\Poissonbracket{\tobefilledin,\tobefilledin}:$  $\symmetricalgebra\otimes
 \symmetricalgebra\lon \symmetricalgebra$ satisfying the following properties:
\begin{itemize}
\item[1)] $\Poissonbracket{v,v'}=
    \Poissonbracket{\epsilon,\epsilon'}=0$, $\Poissonbracket{v,\epsilon}= \minuspower{\abs{v}}\pairing{v}{\epsilon}
$, $\forall$ $v, v'\in \shiftby{V}{2}$,
$\epsilon,\epsilon'\in \shiftby{\Vs}{1}$;
\item[2)]{ $\Poissonbracket{e_1,e_2}=-\minuspower{ (\abs{e_1}+3)
(\abs{e_2}+3)  }\Poissonbracket{e_2,e_1} $, $\forall$
$e_i\in \symmetricalgebra$; }
\item[3)]$\Poissonbracket{e_1,e_2\symmetricproduct e_3}=
\Poissonbracket{e_1,e_2}\symmetricproduct e_3 + \minuspower{(
\abs{e_1}+3)\abs{e_2}}e_2\symmetricproduct
\Poissonbracket{e_1,e_3}$,  $\forall$
$e_i\in \symmetricalgebra$.
\end{itemize}

It is clear that $\Poissonbracket{\tobefilledin,\tobefilledin}$
is of degree $3$, i.e. $$\abs{\Poissonbracket{e_1,e_2}}=\abs{e_1} +\abs{e_2}+3,$$
for all homogeneous $e_i\in \symmetricalgebra$, 
and the following graded Jacobi identity holds:
$$\Poissonbracket{e_1,\Poissonbracket{e_2,e_3}}=\Poissonbracket{
\Poissonbracket{e_1,e_2},e_3}
+\minuspower{(\abs{e_1}+3)(
\abs{e_2}+3)}\Poissonbracket{e_2,\Poissonbracket{e_1,e_3}}  .
$$

Hence $(\symmetricalgebra,\symmetricproduct,
\Poissonbracket{\tobefilledin,\tobefilledin})$ is a Schouten algebra, also known as an odd Poisson algebra, or a Gerstenharber algebra \cite{MR1958834}.

\begin{rmk}
Due to our degree convention, when $V$ is a vector space considered
as a graded vector space concentrated at degree $0$,
the big bracket above  is different from the usual
big bracket in the literature \cite{MR2103012}.
\end{rmk}
{
An element $F\in S^p(\shiftby{V}{2})\symmetricproduct
S^q(\shiftby{\Vs}{1})$ can be considered as a $q$-polyvector field
on $M=\shiftby{\Vs}{-2}$, while an  element $x\in {{\Sbullet}(\shiftby{V}{2})}$ can be considered as a function on $\shiftby{\Vs}{-2}$. Therefore, by applying $F$ to $q$-tuples of functions on $M=\shiftby{\Vs}{-2}$, i.e., taking
successive  Schouten brackets of $F$ with functions on $M=\shiftby{\Vs}{-2}$, we obtain a multilinear map:
\begin{align*}
&\derivedby{F}:
     &        \underbrace{ {{\Sbullet}(\shiftby{V}{2})} )\otimes \cdots \otimes {{\Sbullet}(\shiftby{V}{2})}} &
            ~\longrightarrow~  {{\Sbullet}(\shiftby{V}{2})},&
\\
&   &  \mbox{ $q$ -tuples\qquad ~~~ }
        &&&
\end{align*}
by
$$\qquad\derivedby{F}(x_1,\cdots,x_q)
=\Poissonbracket{
            \Poissonbracket{\cdots
                        \Poissonbracket{
                            \Poissonbracket{F,x_1},x_2},
                    \cdots
                ,x_{q-1}
                },                        x_q
            },$$
for all $x_i\in {{\Sbullet}(\shiftby{V}{2})}$.
}

It is easy to see that
\begin{equation}\label{Eqt:derivedbyentriesalternate}
\derivedby{F}(x_1,\cdots, x_i,x_{i+1},\cdots,x_q)
=(-1)^{(\abs{x_i}+3)(\abs{x_{i+1}}+3)}\derivedby{F}(x_1,\cdots,
x_{i+1},x_i,\cdots,x_q).
\end{equation}

For any $E\in S^k(\shiftby{V}{2})\symmetricproduct
S^l(\shiftby{\Vs}{1})$, and  $F\in S^p(\shiftby{V}{2})\symmetricproduct
S^q(\shiftby{\Vs}{1})$, we have
\begin{eqnarray}
&&\derivedby{\Poissonbracket{E,F}}(x_1,\cdots,x_{n}) \label{Eq:bigbracket}\\
&=&\sum_{\sigma\in \US(q,l-1)}\Koszul{\breve{\sigma}}\derivedby{E}
    (
        \derivedby{F}(x_{\sigma(1)},\cdots,x_{\sigma(q)}),
        x_{\sigma(q+1)},\cdots,x_{\sigma(n)}
    ) \nonumber
\\&&\quad
    -\minuspower{(\abs{E}+3)(\abs{F}+3)}
    \sum_{\sigma\in
\US(l,q-1)}\Koszul{\breve{\sigma}}\derivedby{F}
    (
        \derivedby{E}(x_{\sigma(1)},\cdots,x_{\sigma(l)}),
        x_{\sigma(l+1)},\cdots,x_{\sigma(n)}
    ), \nonumber
\end{eqnarray}
for all $x_1,\cdots,x_n\in {{\Sbullet}(\shiftby{V}{2})}$,
where $n=q+l-1$.
Here  $\US(j,n-j)$ denotes the collection of
$(j,n-j)$-shuffles and $\Koszul{\breve{\sigma}}$ denotes
the Koszul sign: switching any two successive elements
 $x_{i}$ and $x_{i+1}$  leads to
 a sign change $\minuspower{(\abs{x_i}+3)(\abs{x_{i+1}}+3)}$.

\subsection{Weak Lie 2-algebras, coalgebras and bialgebras}\label{subsec:weakLie2algebras}

Following Baez-Crans \cite{MR2068522},
 {
there is an equivalence of 2-categories between the 2-category of weak Lie 2-algebras  the 2-category of 2-term $\Linf$-algebras.}
Unfolding the $\Linf$-structure on the 2-term graded vector
space $V=\theta\oplus \galgebradegzero$, where $\theta$ is of degree $1$ and $\galgebradegzero$ is of degree $0$, we can equivalently define
a weak Lie 2-algebra as a
 pair of vector spaces $\thetaalgebradegone$ and
$\galgebradegzero$ endowed with the following structures:

\begin{enumerate}
\item a linear map $\phimap$: $\thetaalgebradegone\lon \galgebradegzero$;
\item a bilinear skew-symmetric
 map  $\bracketmapbracket{\tobefilledin}{\tobefilledin}$:
 $\galgebradegzero\otimes\galgebradegzero\to
\galgebradegzero$;
\item a bilinear map  $\actionmapaction{\tobefilledin}{\tobefilledin}$:
$\galgebradegzero\otimes\thetaalgebradegone\lon\thetaalgebradegone$;
\item a trilinear skew-symmetric map $\homotopymap$: $\galgebradegzero\otimes\galgebradegzero\otimes\galgebradegzero\lon
\thetaalgebradegone$,  called the homotopy map.
\end{enumerate}

These maps are  required to satisfy  the following compatibility conditions:
for all $w,x,y,z \in \galgebradegzero$ and $u,v\in
\thetaalgebradegone$,
\begin{enumerate}
\item[]  \begin{equation}\bracketmapbracket{
\bracketmapbracket{x}{y}}{z}+ \bracketmapbracket{
\bracketmapbracket{y}{z}}{x}+\bracketmapbracket{
\bracketmapbracket{z}{x}}{y}+(\phimap\circ\homotopymap)(x,y,z)=0;
\end{equation}
\item[]
\begin{equation}
\actionmapaction{y}{(\actionmapaction{x}{u})}
-\actionmapaction{x}{(\actionmapaction{y}{u})} +\actionmapaction{
\bracketmapbracket{x}{y}}{u}+\homotopymap(\phimap(u),x,y)=0;
\end{equation}
\item[]
\begin{equation}
\actionmapaction{\phimap(u)}{v}+\actionmapaction{\phimap(v)}{u}=0;
\end{equation}
\item[]
\begin{equation}\phimap(\actionmapaction{x}{u})=\bracketmapbracket{x}{\phimap(u)};
\end{equation}
\item[]\begin{eqnarray}\nonumber
&&
-\actionmapaction{w}{\homotopymap(x,y,z)}
-\actionmapaction{y}{\homotopymap(x,z,w)}
+\actionmapaction{z}{\homotopymap(x,y,w)}
+\actionmapaction{x}{\homotopymap(y,z,w)}
\\\nonumber &=&
\homotopymap(\bracketmapbracket{x}{y},z,w)-\homotopymap(\bracketmapbracket{x}{z},y,w)
+\homotopymap(\bracketmapbracket{x}{w},y,z)+\homotopymap(\bracketmapbracket{y}{z},x,w)\\
&&
-\homotopymap(\bracketmapbracket{y}{w},x,z)+\homotopymap(\bracketmapbracket{z}{w},x,y).
\end{eqnarray}
\end{enumerate}

If $\homotopymap$ vanishes, we call it a strict Lie 2-algebra, or simply a Lie 2-algebra.

Now consider the degree-shifted vector spaces $\shiftby{V}{2}$ and
$\shiftby{\Vs}{1}$.
Under such a degree convention, the degrees of $\galgebradegzero$,
$\thetaalgebradegone$, $\galgebrastardegminusone$ and
$\thetaalgebrastardegminustwo$ are specified in the following table:
\begin{center}
    \begin{tabular}{   l  | l |  l | l | l }
     space & $\galgebradegzero$ & $\thetaalgebradegone$ & $\galgebrastardegminusone$
     & $\thetaalgebrastardegminustwo$ \\\hline
   degree & $-2$&$-1$&$-1$& $-2$
\end{tabular}
\end{center}
{
\begin{rmk}The reason that we are using such a  degree convention  can be summarized as follows. First,   under such assumptions, the elements  $\el$ in Proposition \ref{Prop:Lie2algebraelements}, $c$ in Proposition \ref{Prop:Lie2coalgebraelements} and $\varepsilon$ in Definition \ref{Defn:Lie2bialgebra} all will be  of homogenously  degree $(-4)$. Second, we see from the above table that $(\thetaalgebra,\galgebra)$ and $(\galgebrastar,\thetaalgebrastar)$ are symmetric. In fact, when we define the notion of a Lie bialgebra crossed module in the sequel, we are asking $\crossedmoduletriple{\thetaalgebra}{\phi}{\galgebra} $ and
$\crossedmoduletriple{\galgebrastar}{-\phi^*
}{\thetaalgebrastar} $ both to be Lie algebra crossed modules.
\end{rmk}}
We will maintain this convention throughout this paper.
Recall that $\Sbullet=\Sbullet({\shiftby{\Vs}{1}\oplus \shiftby{V}{2}})$.

\begin{prop}
\label{Prop:Lie2algebraelements}
Under the above degree convention, a weak Lie 2-algebra
structure is equivalent to a solution to the equation:
\begin{equation}\label{ll=0}
\Poissonbracket{\el,\el}=0,
\end{equation}
where $\el=\derivesvarphi+\derivesbracket+
\derivesaction+\derivesh$ in $\degreesubspace{\mathcal{S}}{-4}$ such that
\begin{equation}\label{eqt:derivesdata}\left\{
\begin{array}{r@{~}l}
 & \derivesvarphi\in \thetaalgebrastardegminustwo \symmetricproduct
\galgebradegzero,\\
& \derivesbracket\in (\symmetricproduct^2\galgebrastardegminusone)\symmetricproduct \galgebradegzero,\\
& \derivesaction\in
\galgebrastardegminusone\symmetricproduct\thetaalgebrastardegminustwo\symmetricproduct\thetaalgebradegone,\\
& \derivesh\in
(\symmetricproduct^3\galgebrastardegminusone)\symmetricproduct
\thetaalgebradegone.
\end{array}
\right.
\end{equation}
Here the bracket in Eq.\eqref{ll=0} stands for the big bracket as in Section \ref{Schoutenbracket},  {and the notation $${\varepsilon^{\scriptscriptstyle  pq }_{\scriptscriptstyle kl}}\in (\symmetricproduct^q\galgebrastardegminusone)\symmetricproduct(\symmetricproduct^l\thetaalgebrastardegminustwo)\symmetricproduct(\symmetricproduct^k\thetaalgebradegone)\symmetricproduct (\symmetricproduct^p\galgebra)$$ helps the reader to keep track of its underlying space}.
\end{prop}
\begin{proof}

There is a bijection between
the structure maps $\phimap$, $\bracketmapbracket{\tobefilledin}{\tobefilledin}$, $\actionmapaction{\tobefilledin}{\tobefilledin}$ and $\homotopymap$ and the data $\derivesvarphi$, $\derivesbracket$, $\derivesaction$ and $\derivesh$.
They are related by the following equations:
\begin{eqnarray*}
\phimap(u)&=&\derivedby{\derivesvarphi}(u),\\
\bracketmapbracket{x}{y}&=&\derivedby{\derivesbracket}(x,y),\\
\actionmapaction{x}{u}&=&\derivedby{\derivesaction}(x,u),\\
\homotopymap(x,y,z)&=&\derivedby{\derivesh}(x,y,z),
\end{eqnarray*}
$\forall x,y,z\in\galgebra$, $u\in \thetaalgebra$.

 Since $\el=\derivesvarphi+\derivesbracket+
\derivesaction+\derivesh \in \degreesubspace{\mathcal{S}}{-4}$,
 a  simple computation leads to
\begin{eqnarray*}\Poissonbracket{\el,\el} &=&\Poissonbracket{\derivesbracket,\derivesbracket}
 +\Poissonbracket{\derivesaction,\derivesaction}\\&&
 +2\Poissonbracket{\derivesvarphi,\derivesbracket}
 +2\Poissonbracket{\derivesvarphi,\derivesaction}
 +2\Poissonbracket{\derivesvarphi,\derivesh}
+2\Poissonbracket{\derivesbracket,\derivesaction}
 +2\Poissonbracket{\derivesbracket,\derivesh}+2\Poissonbracket{\derivesaction,\derivesh}.
\end{eqnarray*}

By using Eq. \eqref{Eq:bigbracket}, we have, $\forall x,y,z\in\galgebra$,
 \begin{eqnarray}
 \derivedby{\Poissonbracket{\el,\el}}(x,y,z)&=& \derivedby{\Poissonbracket{\derivesbracket,\derivesbracket}}(x,y,z)
 +2\derivedby{\Poissonbracket{\derivesvarphi,\derivesh}}(x,y,z) \nonumber\\
 &=&
2\derivedby{\derivesbracket}(\derivedby{\derivesbracket}(x,y),z)+\CP
+2\derivedby{\derivesvarphi}(\derivedby{\derivesh}(x,y,z) \nonumber
\\
&=&2(\bracketmapbracket{
\bracketmapbracket{x}{y}}{z}+\CP +\phimap\circ\homotopymap(x,y,z)). \label{eq:1}
 \end{eqnarray}

Similarly,
\begin{eqnarray}
 \derivedby{\Poissonbracket{\el,\el}}(x,y,u)
 &=&2 \bigl(\actionmapaction{y}{(\actionmapaction{x}{u})}
-\actionmapaction{x}{(\actionmapaction{y}{u})} +\actionmapaction{ \bracketmapbracket{x}{y}}{u}\bigr)+\homotopymap(\phimap(u),x,y), \label{eq:2} \\
\derivedby{\Poissonbracket{\el,\el}}(u,x)
&=& 2\bigl( \phimap(\actionmapaction{x}{u})+\bracketmapbracket{\phimap(u)}{x}\bigr), \label{eq:3} \\
\derivedby{\Poissonbracket{\el,\el}}(u,v)
&=&2\bigl(\actionmapaction{\phimap(u)}{v}+\actionmapaction{\phimap(v)}{u}\bigr), \label{eq:4}\\
\derivedby{\Poissonbracket{\el,\el}}(x,y,z,w)
&=&2\bigl(\homotopymap(\bracketmapbracket{x}{y},z,w)+\actionmapaction{w}{\homotopymap(x,y,z)}+\CP\bigr)  \label{eq:5}.
\end{eqnarray}

It thus follows that $\Poissonbracket{\el,\el}$ vanishes if and only if
the LHS of Eqs. \eqref{eq:1}-\eqref{eq:5} vanish.
The latter is equivalent to the compatibility conditions defining a weak Lie 2-algebra.
This concludes the proof.
\end{proof}

In the sequel, we denote a weak Lie 2-algebra by
$(\thetaalgebradegone{{\lon}}\galgebradegzero,l)$ in order to emphasize
the map from $\thetaalgebradegone$ to $\galgebradegzero$.
Sometimes, we will omit $l$ and denote a weak Lie 2-algebra simply by
$(\thetaalgebradegone{{\lon}}\galgebradegzero)$.
If $\crossedmoduletriple{\galgebrastardegminusone}{ }{\thetaalgebrastardegminustwo}$
is a weak Lie 2-algebra, then $\crossedmoduletriple{\thetaalgebra}{ }{\galgebra}$
is called a weak Lie 2-coalgebra.

\begin{rmk}
Equivalently, a weak Lie 2-coalgebra underlying $\crossedmoduletriple{\thetaalgebra}{ }{\galgebra}$ is a 2-term $\Linf$-structure on $\galgebrastardegminusone\oplus\thetaalgebrastardegminustwo$,
where $\galgebrastardegminusone$ has degree $1$ and
$\thetaalgebrastardegminustwo$ has degree $0$.
\end{rmk}

Similarly, we have the following

\begin{prop}
\label{Prop:Lie2coalgebraelements}
A weak Lie 2-coalgebra is equivalent to a solution to the equation:
\begin{equation}\nonumber
\Poissonbracket{c,c}=0,
\end{equation}
where  $c=\derivescovarphi+\derivescobracket+ \derivescoaction+\derivescoh\in
\degreesubspace{\mathcal{S}}{-4}$ such that
 \begin{equation}\label{eqt:derivescodata}\left\{
\begin{array}{r@{~}l}
 & \derivescovarphi\in \thetaalgebrastardegminustwo \symmetricproduct
\galgebradegzero,\\
& \derivescobracket\in  \thetaalgebrastardegminustwo \symmetricproduct (\symmetricproduct^2\thetaalgebradegone),\\
& \derivescoaction\in
\galgebrastardegminusone\symmetricproduct\galgebradegzero\symmetricproduct\thetaalgebradegone,\\
& \derivescoh\in
\galgebrastardegminusone\symmetricproduct(\symmetricproduct^3\thetaalgebradegone).
\end{array}
\right.
\end{equation}

\end{prop}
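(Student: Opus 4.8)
The plan is to derive this as a corollary of Proposition~\ref{Prop:Lie2algebraelements} by dualizing, which is possible precisely because our degree conventions were chosen symmetrically. By the definition recalled just above, a weak Lie 2-coalgebra underlying $\crossedmoduletriple{\thetaalgebra}{}{\galgebra}$ \emph{is} a weak Lie 2-algebra structure on the 2-term graded vector space $W=\galgebrastardegminusone\oplus\thetaalgebrastardegminustwo$, in which $\galgebrastar$ is placed in degree $1$ and $\thetaalgebrastar$ in degree $0$. So the first step is simply to apply Proposition~\ref{Prop:Lie2algebraelements} with $W$ in the role of $V$, and the remaining work is to translate its conclusion back into the data of $(\thetaalgebra\to\galgebra)$.

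Before that translation can be done, I would identify the Schouten (Gerstenhaber) algebra $\Sbullet(\shiftby{W}{2}\oplus\shiftby{W^*}{1})$ attached to $W$ with the algebra $\symmetricalgebra=\Sbullet(\shiftby{V}{2}\oplus\shiftby{\Vs}{1})$ attached to $V$. Unwinding the shifts, $\shiftby{W}{2}$ supplies $\thetaalgebrastardegminustwo$ in degree $-2$ together with $\galgebrastardegminusone$ in degree $-1$, and $\shiftby{W^*}{1}$ supplies $\thetaalgebradegone$ in degree $-1$ together with $\galgebradegzero$ in degree $-2$; comparing with the table above, this is the same graded vector space underlying $\symmetricalgebra$, the only difference being that the ``polyvector-field'' half $\shiftby{V}{2}$ and the ``function'' half $\shiftby{\Vs}{1}$ have exchanged roles. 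I would then check that the Schouten bracket is insensitive to this exchange: it is characterized by properties 1)--3) of Section~\ref{Schoutenbracket}, and these are invariant under swapping the two halves once all Koszul signs are spelled out --- the sign $\minuspower{\abs{v}}$ in $\Poissonbracket{v,\epsilon}=\minuspower{\abs{v}}\pairing{v}{\epsilon}$ is precisely what is needed to keep property 2) (graded skew-symmetry) intact under the swap, and this in turn works out only because $V$ is shifted by $2$ while $\Vs$ is shifted by $1$. This sign bookkeeping is the one genuinely computational point and, I expect, the only real obstacle; it is exactly the payoff of the ``symmetric'' degree convention emphasized in the remark above.

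With this identification in hand, Proposition~\ref{Prop:Lie2algebraelements} applied to $W$ states that a weak Lie 2-algebra structure on $(\galgebrastar\to\thetaalgebrastar)$ is the same as a homogeneous degree $(-4)$ element $c=\derivescovarphi+\derivescobracket+\derivescoaction+\derivescoh$ of $\symmetricalgebra$ with $\Poissonbracket{c,c}=0$, the four summands lying in the subspaces obtained from \eqref{eqt:derivesdata} by the substitution $\thetaalgebra\leftrightarrow\galgebrastar$, $\galgebra\leftrightarrow\thetaalgebrastar$ (and correspondingly $\thetaalgebrastar\leftrightarrow\galgebra$, $\galgebrastar\leftrightarrow\thetaalgebra$ on the dual side). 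The final step is to carry out this substitution: $\thetaalgebrastardegminustwo\symmetricproduct\galgebradegzero$, $(\symmetricproduct^2\galgebrastardegminusone)\symmetricproduct\galgebradegzero$, $\galgebrastardegminusone\symmetricproduct\thetaalgebrastardegminustwo\symmetricproduct\thetaalgebradegone$ and $(\symmetricproduct^3\galgebrastardegminusone)\symmetricproduct\thetaalgebradegone$ become respectively $\galgebradegzero\symmetricproduct\thetaalgebrastardegminustwo$, $\thetaalgebrastardegminustwo\symmetricproduct(\symmetricproduct^2\thetaalgebradegone)$, $\thetaalgebradegone\symmetricproduct\galgebradegzero\symmetricproduct\galgebrastardegminusone$ and $\galgebrastardegminusone\symmetricproduct(\symmetricproduct^3\thetaalgebradegone)$, which are exactly the four subspaces in \eqref{eqt:derivescodata}; and under the bijection of Proposition~\ref{Prop:Lie2algebraelements} the pieces $\derivescovarphi,\derivescobracket,\derivescoaction,\derivescoh$ go over to the cobracket-side structure maps $\cophimap,\cobracketmap,\coactionmap,\cohomotopymap$ of the weak Lie 2-coalgebra. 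This is routine manipulation with dual bases. (One could instead simply rerun the computation in the proof of Proposition~\ref{Prop:Lie2algebraelements} verbatim with the co-data replacing $\el$; the duality argument above just repackages that computation.)
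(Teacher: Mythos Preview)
Your proposal is correct. The paper gives no explicit proof here --- the proposition is prefaced only by ``Similarly, we have the following'' --- so your duality argument (applying Proposition~\ref{Prop:Lie2algebraelements} to $W=\galgebrastar\oplus\thetaalgebrastar$ and identifying the resulting Gerstenhaber algebra with $\symmetricalgebra$) is a legitimate and slightly more conceptual way to fill in what the paper leaves implicit; the sign check you flag does go through on generators, and as you yourself note one could equally well just rerun the computation of Proposition~\ref{Prop:Lie2algebraelements} verbatim with the co-data.
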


We denote such a weak Lie 2-coalgebra by
$(\thetaalgebradegone{{\lon}}\galgebradegzero,c)$.

Now we are ready to introduce the main object of this section.

\begin{defn}\label{Defn:Lie2bialgebra}
A   weak   Lie 2-bialgebra consists of a pair of vector spaces
 $\theta$ and $\galgebradegzero$ together with
a solution $\varepsilon=  \derivesbracket+
\derivesaction+\derivesh+\derivescovarphi+\derivescobracket+ \derivescoaction+\derivescoh\in
 \degreesubspace{\mathcal{S}}{-4}$
to the equation:
$$\{\varepsilon, \varepsilon\}=0.$$
Here  $\derivesbracket,\derivesaction,\derivesh,  \derivescovarphi,\derivescobracket,\derivescoaction,\derivescoh$
are as in  Eqs. \eqref{eqt:derivesdata} and \eqref{eqt:derivescodata}.

If,  moreover,  $\derivesh=0$, it is called  a quasi-Lie 2-bialgebra.
 If both $\derivesh$ and $\derivescoh$ vanish, we say that the Lie 2-bialgebra
 is strict, or simply a Lie 2-bialgebra.
 \end{defn}

\begin{rmk}\label{rmk:different notions}
   Note that, in the literature, there exist    notions of
homotopy Lie bialgebras \cite{MR2327020} and   Lie 2-bialgebras
\cite{MR2600029}. However, weak Lie 2-bialgebras in our sense are
 neither of them. 
The pattern is that any of these notions are given by 
a homogenous element $h$ in a (even or odd)  Poisson algebra satisfying $\{h,h\}=0$,
 whose bracket $\{\cdot, \cdot\}$ are analogues of  the usual big bracket
 of Kosmann-Schwarzbach \cite{MR2103012}.
The big bracket in \cite{MR2327020} is defined on
 $\Sbullet(V\oplus \Vs)$ (without any degree shifting) and generalizes 
the usual big bracket. Our big bracket in Section \ref{Schoutenbracket},
however, does not reduce to the usual big bracket when $V$ is a vector space
considered as a graded vector space concentrated at degree $0$.
As a result,  the usual Lie bialgebras  in the sense of Drinfeld
 \cite{MR934283} are not  Lie 2-bialgebras in our sense,
but are homotopy Lie bialgebras in the sense of Kravchenko \cite{MR2327020}.

On the other hand, the big bracket in  \cite{MR2600029} is  an 
odd Poisson structure, and  defined on $\Sbullet(V\oplus \shiftby{\Vs}{ 1})$.
 Although it can
 be identified with our big bracket  in Section \ref{Schoutenbracket} under some proper degree adjustment,
 an element $h$  that form  a Lie 2-bialgebra in the sense of Merkulov
 does not define a weak Lie 2-bialgebra in our sense because it has 
 degree $2$ as an element in $\Sbullet(V\oplus \shiftby{\Vs}{1})$, whereas in $\Sbullet({\shiftby{\Vs}{1}\oplus \shiftby{V}{2}})$ it is not even homogenous.
 \end{rmk}

\begin{rmk}{The reader may have already noticed that for the degree-$(-4)$ element $\varepsilon$ in defining weak Lie 2-bialgebras,  we do not  consider
 terms of the form $ \symmetricproduct^2 \galgebra$,
 $\symmetricproduct^2\theta^*$,
 $(\symmetricproduct^2\theta)\symmetricproduct (\symmetricproduct^2 \galgebrastar)$, $\symmetricproduct^4\theta$, $\symmetricproduct^4 \galgebrastar $, $(\symmetricproduct^2\theta)\symmetricproduct \galgebra$ and $(\symmetricproduct^2 \galgebrastar)\symmetricproduct \theta^*$.
The reason  is explained in a companion paper \cite{CSX}, where it is
 shown that the infinitesimal of a Poisson 2-group corresponds  exactly to  a
 Lie 2-bialgebra.
Incorporating these missing terms would lead to a notion of quasi weak Lie 2-bialgebra, by analogy with the interpretation of quasi-Lie bialgebras of the usual big brackets. We expect further studies in this direction.}
\end{rmk}
\begin{prop}\label{Prop:tt=0impliesll=0andcc=0}
Let $(\thetaalgebradegone, \galgebradegzero, \varepsilon)$ be a weak
Lie 2-bialgebra as in Definition \ref{Defn:Lie2bialgebra}.
Then $(\thetaalgebradegone{{\lon}}\galgebradegzero,\el)$,
where $\el=\derivescovarphi+\derivesbracket+ \derivesaction+\derivesh$,
is a weak Lie 2-algebra, while $(\thetaalgebradegone{{\lon}}\galgebradegzero,c)$,
where $c=\derivescovarphi+\derivescobracket+ \derivescoaction+\derivescoh$,
is a weak Lie 2-coalgebra.
\end{prop}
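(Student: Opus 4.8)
The plan is to equip $\symmetricalgebra$ with an auxiliary $\ZZ$-grading, independent of the internal grading, for which the big bracket has degree $0$ and which is sharp enough to pry the ``algebra half'' $\el$ and the ``coalgebra half'' $c$ apart inside $\varepsilon$. Concretely, assign to a monomial of $\symmetricalgebra=\Sbullet(\shiftby{\Vs}{1}\oplus\shiftby{V}{2})$ the \emph{weight} $w=3(p-q)+(k-l)$, where $p,q,k,l$ count its tensor factors lying in $\galgebra$, $\galgebrastar$, $\thetaalgebra$, $\thetaalgebrastar$ respectively (so $w$ assigns the weights $3,-3,1,-1$ to generators of $\galgebra,\galgebrastar,\thetaalgebra,\thetaalgebrastar$). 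This gives $\symmetricalgebra=\bigoplus_{n\in\ZZ}\symmetricalgebra_{(n)}$. Since the duality pairing underlying the Schouten bracket is nonzero only between $\galgebra$ and $\galgebrastar$ and between $\thetaalgebra$ and $\thetaalgebrastar$, each elementary contraction removes one $\galgebra$ together with one $\galgebrastar$, or one $\thetaalgebra$ together with one $\thetaalgebrastar$, and in both cases the total weight is unchanged; hence $\Poissonbracket{\symmetricalgebra_{(n)},\symmetricalgebra_{(m)}}\subseteq\symmetricalgebra_{(n+m)}$, and the equation $\Poissonbracket{\varepsilon,\varepsilon}=0$ splits into weight-homogeneous components, each of which vanishes separately.

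Next, reading off the spaces in Eqs.~\eqref{eqt:derivesdata} and~\eqref{eqt:derivescodata}, I would record the weights of the seven constituents of $\varepsilon$ (recall $\derivescovarphi=\derivesvarphi$):
\[ w(\derivesvarphi)=2,\quad w(\derivesbracket)=w(\derivesaction)=-3,\quad w(\derivesh)=-8,\quad w(\derivescobracket)=w(\derivescoaction)=1,\quad w(\derivescoh)=0. \]
Writing $\varepsilon=\el+(\derivescobracket+\derivescoaction+\derivescoh)$ and expanding $\Poissonbracket{\varepsilon,\varepsilon}$ bilinearly (the bracket is symmetric on $\degreesubspace{\mathcal{S}}{-4}$), the terms that assemble $\Poissonbracket{\el,\el}$ carry weights in $\{4,-1,-6,-11,-16\}$, whereas every other term of $\Poissonbracket{\varepsilon,\varepsilon}$ involves at least one of $\derivescobracket,\derivescoaction,\derivescoh$ and hence has weight in $\{3,2,1,0,-2,-3,-7,-8\}$. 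These two sets are disjoint, so $\Poissonbracket{\el,\el}$ coincides with the part of $\Poissonbracket{\varepsilon,\varepsilon}$ of weight in $\{4,-1,-6,-11,-16\}$, which is a sum of vanishing homogeneous components; thus $\Poissonbracket{\el,\el}=0$. The mirror computation, now writing $\varepsilon=c+(\derivesbracket+\derivesaction+\derivesh)$, shows $\Poissonbracket{c,c}$ carries weights in $\{4,3,2,1,0\}$ while every remaining term of $\Poissonbracket{\varepsilon,\varepsilon}$ involves one of $\derivesbracket,\derivesaction,\derivesh$ and so has strictly negative weight (indeed $\le -1$); hence $\Poissonbracket{c,c}=0$ as well.

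It then remains only to invoke the earlier propositions: $\el=\derivesvarphi+\derivesbracket+\derivesaction+\derivesh$ lies in $\degreesubspace{\mathcal{S}}{-4}$ with its components exactly in the spaces prescribed by Eq.~\eqref{eqt:derivesdata}, so $\Poissonbracket{\el,\el}=0$ together with Proposition~\ref{Prop:Lie2algebraelements} shows that $(\thetaalgebradegone{\lon}\galgebradegzero,\el)$ is a weak Lie 2-algebra, and likewise $\Poissonbracket{c,c}=0$ with Proposition~\ref{Prop:Lie2coalgebraelements} shows that $(\thetaalgebradegone{\lon}\galgebradegzero,c)$ is a weak Lie 2-coalgebra. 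The only genuine work is the weight bookkeeping of the second paragraph, and the point demanding care is the choice of grading: the internal degree is constant (equal to $-4$) on all seven constituents, and cruder combinatorial invariants (e.g.\ the total number of generators, or the separate counts $p+q$ and $k+l$) fail to separate $\Poissonbracket{\el,\el}$ and $\Poissonbracket{c,c}$ from the honest compatibility terms of $\Poissonbracket{\varepsilon,\varepsilon}$ — it is precisely the non-uniform weights $3$ and $1$ on the two pairs of dual generators that do the job.
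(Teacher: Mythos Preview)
Your proof is correct and follows essentially the paper's approach. The paper's one-line argument simply says ``by examining each component, $\Poissonbracket{\varepsilon,\varepsilon}=0$ implies $\Poissonbracket{\el,\el}=0$ and $\Poissonbracket{c,c}=0$''; your weight $w=3(p-q)+(k-l)$ is a concrete linear functional on the natural multigrading of $\symmetricalgebra$ (which the big bracket preserves via $(p-q,k-l)$) that makes this component separation explicit with a single integer rather than by direct inspection of the quadrigrading.
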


\begin{proof}
It is easy to see, by examining each component,
that $\Poissonbracket{\varepsilon,\varepsilon}=0$ implies
$\Poissonbracket{\el,\el}=0$ and $\Poissonbracket{c,c}=0$.
Hence $(\thetaalgebradegone{{\lon}}\galgebradegzero,\el)$ is a weak Lie
2-algebra and $(\thetaalgebradegone{{\lon}}\galgebradegzero,c)$ is a
weak Lie 2-coalgebra by Proposition \ref{Prop:Lie2algebraelements}
and Proposition \ref{Prop:Lie2coalgebraelements}.
\end{proof}

\begin{ex}
Assume that $\galgebra$ is a semisimple Lie algebra. Let
$\ipG{\tobefilledin}{\tobefilledin}$ be its Killing form.
Then $ \homotopymap(x,y,z)=
~\plankconstant\ipG{x}{\ba{y}{z}}$, $\forall~~ x,y,z\in\galgebra$,
is  a Lie algebra  $3$-cocycle, where $\plankconstant$ is a constant.
Let $\thetaalgebra=\reals$. Then the trivial map $\reals\to \galgebra$
 together with $\homotopymap$ becomes a weak Lie  2-algebra, called
the string Lie 2-algebra \cite{MR2068522}. More precisely, the
string Lie 2-algebra is as follows:
\begin{enumerate}
\item $\thetaalgebra$  is the abelian Lie algebra $\reals$;
\item $\galgebra$ is a semisimple Lie algebra;
\item $\phimap:~\thetaalgebra\lon \galgebra$ is the trivial map;
\item the action map
$\moduleaction :\galgebra\otimes \thetaalgebra\to \thetaalgebra$
is the trivial map;
\item $\homotopymap :~\wedge^3\galgebra\lon \thetaalgebra$
is given by the map $\plankconstant \ipG{ \cdot ~ }{\ba{\cdot  ~}{\cdot  ~}}$,
 where $\plankconstant$ is a fixed  constant.
\end{enumerate}
Now fix an element $x\in \galgebra$.
We equip a weak Lie 2-coalgebra on $\reals\to \galgebra$ as follows:
\begin{enumerate}
\item $\galgebrastar$ is an  abelian Lie algebra;
\item $\thetaalgebrastar\cong \reals$ is an  abelian Lie algebra;
\item $\phi^*:~\galgebrastar\lon \thetaalgebrastar$ is the trivial map;
\item the $\thetaalgebrastar$-action on $\galgebrastar$ is given by
$ \unit\moduleaction \xi= \adjoint{x}^* \xi, \ \forall~~\xi\in\galgebrastar$;
\item $\cohomotopymap:~\wedge^3\thetaalgebrastar\lon \galgebrastar$ is the
 trivial map.
\end{enumerate}
One can verify directly that these relations indeed
define a weak Lie 2-bialgebra.
\end{ex}

\section{Lie bialgebra crossed modules}

\subsection{Definition}

\begin{defn}
A Lie algebra crossed module consists of a pair of Lie algebras
$\thetaalgebra$ and $\galgebra$, and a linear map
$\phi:~\thetaalgebra\to \galgebra$ such that $\galgebra$ acts on
$\thetaalgebra$ by derivations and satisfies, for all $x,y\in\galgebra$,
$u,v\in\thetaalgebra$,
\begin{itemize}
\item[1)] ${\phi (u)} \moduleaction v=\ba{u}{v}$;
\item[2)] $\phi (  x \moduleaction u)=\ba{ x }{\phi (u)}$,
\end{itemize}
\end{defn}
where $\moduleaction$ denotes the $\galgebra$-action on $\thetaalgebra$.

\begin{rmk}
Note that 1) and 2)  imply that $\phi$ must be  a Lie algebra homomorphism.
\end{rmk}

We write $\crossedmoduletriple{\thetaalgebra}{\phi}{\galgebra}$
to denote a Lie algebra crossed module.
The associated semidirect product Lie algebra
is denoted by $\galgebra\ltimes\thetaalgebra$.

\begin{prop}\label{Lem:ensentialcrossedmoduleLiealgebra}
Given a Lie algebra $\galgebra$, a  $\galgebra$-module
$\thetaalgebra$ and   a linear map $\phi:~\thetaalgebra\lon
\galgebra$ satisfying the following two conditions:
\begin{eqnarray}\label{Eqn:tp1}&\phi (  x \moduleaction u)=\ba{ x }{\phi
(u)},\\\label{Eqn:tp2} &{\phi (u)} \moduleaction v=-{\phi (v)}
\moduleaction  u ,
\end{eqnarray}for all $ ~ u,v\in
\thetaalgebra,  x \in \galgebra$,  there exists a unique Lie
algebra structure on $\thetaalgebra$ such that
$\crossedmoduletriple{\thetaalgebra}{\phi}{\galgebra} $ is a Lie
algebra crossed module.

{In other words, any Lie algebra crossed module underlying $\phi:~\thetaalgebra\lon
\galgebra$ is determined by a Lie algebra $\galgebra$ and a  $\galgebra$-module
$\thetaalgebra$ satisfying   \eqref{Eqn:tp1} -- \eqref{Eqn:tp2}.}
\end{prop}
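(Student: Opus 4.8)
The plan is to \emph{define} the bracket on $\thetaalgebra$ by
\[
\ba{u}{v}:=\phi(u)\moduleaction v,\qquad u,v\in\thetaalgebra,
\]
because crossed-module condition~1) forces this formula on \emph{any} bracket making $\crossedmoduletriple{\thetaalgebra}{\phi}{\galgebra}$ a crossed module; hence uniqueness is automatic, and the whole content of the statement is that this bilinear operation is a Lie bracket which, together with the given action, turns $\crossedmoduletriple{\thetaalgebra}{\phi}{\galgebra}$ into a Lie algebra crossed module.

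First I would dispose of everything except the Jacobi identity. Skew-symmetry of $\ba{\cdot}{\cdot}$ is literally \eqref{Eqn:tp2}. Crossed-module condition~1) holds by definition of the bracket, condition~2) is exactly \eqref{Eqn:tp1}, and \eqref{Eqn:tp1} also gives $\phi(\ba{u}{v})=\phi(\phi(u)\moduleaction v)=\ba{\phi(u)}{\phi(v)}$, so $\phi$ will automatically be a Lie algebra morphism once $\thetaalgebra$ is known to be a Lie algebra. That $\galgebra$ acts by derivations of $\ba{\cdot}{\cdot}$ is a one-line check: for $x\in\galgebra$ and $u,v\in\thetaalgebra$,
\[
\ba{x\moduleaction u}{v}+\ba{u}{x\moduleaction v}
=\phi(x\moduleaction u)\moduleaction v+\phi(u)\moduleaction(x\moduleaction v)
=\ba{x}{\phi(u)}\moduleaction v+\phi(u)\moduleaction(x\moduleaction v)
=x\moduleaction\bigl(\phi(u)\moduleaction v\bigr)=x\moduleaction\ba{u}{v},
\]
using \eqref{Eqn:tp1} in the second step and the module axiom $\ba{x}{\phi(u)}\moduleaction v=x\moduleaction(\phi(u)\moduleaction v)-\phi(u)\moduleaction(x\moduleaction v)$ in the third.

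The Jacobi identity is the only step that needs an idea rather than a substitution, and the idea is to extract it for free from the two facts just established. Applying the derivation property with $x$ replaced by $\phi(u)$ to the pair $(v,w)$, and rewriting every $\phi(u)\moduleaction(\cdot)$ as $\ba{u}{\cdot}$, yields
\[
\ba{u}{\ba{v}{w}}=\ba{\ba{u}{v}}{w}+\ba{v}{\ba{u}{w}}.
\]
Using skew-symmetry to bring all three terms into the form $\ba{\ba{\cdot}{\cdot}}{\cdot}$ (namely $\ba{u}{\ba{v}{w}}=-\ba{\ba{v}{w}}{u}$ and $\ba{v}{\ba{u}{w}}=\ba{\ba{w}{u}}{v}$) turns this equation into $\ba{\ba{u}{v}}{w}+\ba{\ba{v}{w}}{u}+\ba{\ba{w}{u}}{v}=0$, which is exactly the Jacobi identity. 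Thus $(\thetaalgebra,\ba{\cdot}{\cdot})$ is a Lie algebra, $\crossedmoduletriple{\thetaalgebra}{\phi}{\galgebra}$ is a Lie algebra crossed module, and by the opening remark it is the unique such structure. I expect the only real (and mild) obstacle to be precisely this recognition --- that for a bracket of the shape $\phi(u)\moduleaction v$ the Jacobi identity carries no information beyond antisymmetry and the Leibniz rule for the action --- after which nothing is left to compute; everything else is bookkeeping with \eqref{Eqn:tp1}, \eqref{Eqn:tp2} and the $\galgebra$-module axioms. A direct ``compute $\ba{\ba{u}{v}}{w}$ two ways'' verification also works but is messier.
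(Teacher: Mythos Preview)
Your proposal is correct and follows the same route as the paper: define $\ba{u}{v}:=\phi(u)\moduleaction v$ and verify the crossed-module axioms, with uniqueness forced by axiom~1). The paper's proof simply states this definition and declares the verifications routine, whereas you have actually carried them out (including the nice observation that Jacobi follows by specializing the derivation property to $x=\phi(u)$); there is no discrepancy in approach.
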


\begin{proof}
Define the Lie bracket on $\thetaalgebra$ by
$ \ba{u}{v}={\phi (u)} \moduleaction v,\quad \forall~ u,v\in\thetaalgebra .$
The rest of the claim can be easily verified directly.
\end{proof}

{Comparing with the definition of a weak Lie 2-algebra at the beginning of Section \ref{subsec:weakLie2algebras}, we see that   a Lie algebra crossed module is equivalent to a (strict) Lie 2-algebra. }

We are now ready to introduce the following

\begin{defn}\label{Defn:bi-crossedmoduleofLiealgebra}
 A Lie bialgebra crossed module  is  a pair of Lie algebra crossed modules
in duality:
$\crossedmoduletriple{\thetaalgebra}{\phi}{\galgebra} $ and
$\crossedmoduletriple{\galgebrastar}{\phiUprotate
}{\thetaalgebrastar} $ , where $\phiUprotate=-\phi^*$,
are both Lie algebra crossed modules such that
$(\Lietwo{\thetaalgebra}{\galgebra},\Lietwo{\galgebrastar}{\thetaalgebrastar}
 )$ is a Lie bialgebra.
\end{defn}

\begin{prop}
If $(\crossedmoduletriple{\thetaalgebra}{\phi}{\galgebra},
\crossedmoduletriple{\galgebrastar}{\phiUprotate
}{\thetaalgebrastar})$ is a Lie bialgebra crossed module,
so is $(\crossedmoduletriple{\galgebrastar}{\phiUprotate
}{\thetaalgebrastar}, \crossedmoduletriple{\thetaalgebra}{\phi}{\galgebra})$.
\end{prop}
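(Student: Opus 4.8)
The plan is to unwind Definition~\ref{Defn:bi-crossedmoduleofLiealgebra} for the interchanged pair $\bigl(\crossedmoduletriple{\galgebrastar}{\phiUprotate}{\thetaalgebrastar},\crossedmoduletriple{\thetaalgebra}{\phi}{\galgebra}\bigr)$ and to check that each of its three requirements is either part of the hypothesis or a purely formal consequence of the symmetry of Lie bialgebra duality. There is nothing to compute; the argument is bookkeeping.

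First I would dispose of the two crossed-module requirements. That $\crossedmoduletriple{\galgebrastar}{\phiUprotate}{\thetaalgebrastar}$ is a Lie algebra crossed module is part of the hypothesis. Its ``dual'', in the sense of Definition~\ref{Defn:bi-crossedmoduleofLiealgebra}, is $\crossedmoduletriple{(\thetaalgebrastar)^*}{-(\phiUprotate)^*}{(\galgebrastar)^*}$; since all vector spaces are finite dimensional we have $(\thetaalgebrastar)^*=\thetaalgebra$, $(\galgebrastar)^*=\galgebra$, and $-(\phiUprotate)^*=-(-\phi^*)^*=\phi^{\ast\ast}=\phi$, so this dual crossed module is exactly $\crossedmoduletriple{\thetaalgebra}{\phi}{\galgebra}$ --- which is a Lie algebra crossed module by hypothesis. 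Hence both crossed-module conditions for the interchanged pair hold; the point to watch is simply that the sign in $\phiUprotate=-\phi^*$ makes the duality of crossed modules an involution, recovering $\phi$ rather than $-\phi$.

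Next I would verify the remaining condition, that $\bigl(\Lietwo{\galgebrastar}{\thetaalgebrastar},\Lietwo{\thetaalgebra}{\galgebra}\bigr)$ is a Lie bialgebra. As vector spaces, $(\Lietwo{\galgebrastar}{\thetaalgebrastar})^*=(\thetaalgebrastar\oplus\galgebrastar)^*=\thetaalgebra\oplus\galgebra=\Lietwo{\thetaalgebra}{\galgebra}$, and under this identification the Lie bracket of $\Lietwo{\thetaalgebra}{\galgebra}$ is dual to the cobracket of $\Lietwo{\galgebrastar}{\thetaalgebrastar}$, and symmetrically for the other pair of brackets. The hypothesis gives that $\bigl(\Lietwo{\thetaalgebra}{\galgebra},\Lietwo{\galgebrastar}{\thetaalgebrastar}\bigr)$ is a Lie bialgebra; I would then invoke the standard fact that the compatibility (1-cocycle) condition defining a Lie bialgebra is symmetric under interchanging a Lie bialgebra with its dual, i.e.\ $(\mathfrak a,\mathfrak a^*)$ is a Lie bialgebra if and only if $(\mathfrak a^*,\mathfrak a)$ is. This immediately yields that $\bigl(\Lietwo{\galgebrastar}{\thetaalgebrastar},\Lietwo{\thetaalgebra}{\galgebra}\bigr)$ is a Lie bialgebra, which is the third and last condition.

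I do not expect a genuine obstacle: the statement is a formal self-duality. The only care required is keeping the iterated dualizations and the sign $-\phi^*$ straight, correctly identifying the underlying vector spaces of the semidirect products (so that $(\Lietwo{\galgebrastar}{\thetaalgebrastar})^*\cong\Lietwo{\thetaalgebra}{\galgebra}$ holds compatibly with all the brackets and cobrackets in play), and citing the symmetry of Lie bialgebra duality; everything else is automatic.
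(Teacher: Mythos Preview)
Your proposal is correct and matches the paper's treatment: the paper states this proposition without proof, regarding it as immediate from the evident self-duality of Definition~\ref{Defn:bi-crossedmoduleofLiealgebra}. Your careful bookkeeping of the double dual, the sign in $\phiUprotate=-\phi^*$, and the symmetry of Lie bialgebra duality is exactly what underlies that omission.
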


The following proposition justifies our terminology.

\begin{prop}
\label{Prop:liebicrossedmoduleimpliesliebi}
If $(\crossedmoduletriple{\thetaalgebra}{\phi}{\galgebra}$,
$\crossedmoduletriple{\galgebrastar}{\phiUprotate
}{\thetaalgebrastar} )$ is a Lie bialgebra crossed module,
then both pairs $(\thetaalgebra , \thetaalgebrastar)$ and
$(\galgebra, \galgebrastar)$ are Lie bialgebras.
\end{prop}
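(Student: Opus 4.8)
The plan is to realize $\galgebra$ as a quotient Lie bialgebra and $\thetaalgebra$ as a sub-Lie-bialgebra of the Lie bialgebra $\mathfrak{f}:=\galgebra\ltimes\thetaalgebra$, and to read off their compatible dual Lie algebras from the decomposition $\mathfrak{f}^*=\thetaalgebrastar\ltimes\galgebrastar$. So the whole proof reduces to the standard theory of Lie sub-bialgebras and quotient Lie bialgebras, once the duality bookkeeping is set up correctly.

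First I would pin down the duality dictionary. As vector spaces $\mathfrak{f}=\galgebra\oplus\thetaalgebra$ and $\mathfrak{f}^*=\galgebrastar\oplus\thetaalgebrastar$, so the annihilator $\thetaalgebra^{0}\subseteq\mathfrak{f}^*$ of $\thetaalgebra$ equals $\galgebrastar$ and $\galgebra^{0}=\thetaalgebrastar$. The crucial observation is that in $\mathfrak{f}^*=\Lietwo{\galgebrastar}{\thetaalgebrastar}=\thetaalgebrastar\ltimes\galgebrastar$ the factor $\galgebrastar$ is the \emph{ideal} and $\thetaalgebrastar$ the complementary subalgebra, so $\thetaalgebra^{0}=\galgebrastar$ is an ideal of $\mathfrak{f}^*$ while $\galgebra^{0}=\thetaalgebrastar$ is only a subalgebra; symmetrically, inside $\mathfrak{f}=\galgebra\ltimes\thetaalgebra$ the subspace $\thetaalgebra$ is an ideal and $\galgebra$ a subalgebra. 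Getting this matching right --- in particular that the annihilator of $\thetaalgebra$ is the \emph{ideal} factor of the dual semidirect product, not the subalgebra factor --- is the one genuinely delicate point; everything else is routine.

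Next I would recall the standard facts for a Lie bialgebra $(\mathfrak{f},\delta)$: (i) a Lie subalgebra $\mathfrak{h}\subseteq\mathfrak{f}$ with $\delta(\mathfrak{h})\subseteq\wedge^2\mathfrak{h}$ is itself a Lie bialgebra for the restricted bracket and cobracket, with dual Lie bialgebra $\mathfrak{f}^*/\mathfrak{h}^{0}$ carrying the quotient structure; (ii) an ideal $\mathfrak{i}\subseteq\mathfrak{f}$ that is also a coideal, i.e. $\delta(\mathfrak{i})\subseteq\mathfrak{i}\wedge\mathfrak{f}+\mathfrak{f}\wedge\mathfrak{i}$, yields a quotient Lie bialgebra $\mathfrak{f}/\mathfrak{i}$ whose dual is the sub-Lie-bialgebra $\mathfrak{i}^{0}\subseteq\mathfrak{f}^*$; and (iii) the annihilator correspondence $\mathfrak{h}$ ideal of $\mathfrak{f}\Leftrightarrow\mathfrak{h}^{0}$ subalgebra of $\mathfrak{f}^*$, and $\delta(\mathfrak{h})\subseteq\wedge^2\mathfrak{h}\Leftrightarrow\mathfrak{h}^{0}$ ideal of $\mathfrak{f}^*$. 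Applying these to $\thetaalgebra\subseteq\mathfrak{f}$: it is an ideal, hence a subalgebra, of $\mathfrak{f}$, and since $\thetaalgebra^{0}=\galgebrastar$ is an ideal of $\mathfrak{f}^*$, fact (iii) gives $\delta_{\mathfrak{f}}(\thetaalgebra)\subseteq\wedge^2\thetaalgebra$; hence by (i), $\thetaalgebra$ is a sub-Lie-bialgebra of $\mathfrak{f}$, with Lie bracket the one inherited from $\galgebra\ltimes\thetaalgebra$ --- which is exactly the given bracket of $\thetaalgebra$ --- and dual Lie bialgebra $\mathfrak{f}^*/\thetaalgebra^{0}=(\thetaalgebrastar\ltimes\galgebrastar)/\galgebrastar=\thetaalgebrastar$ with its given bracket. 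Therefore $(\thetaalgebra,\thetaalgebrastar)$ is a Lie bialgebra.

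Finally, for $(\galgebra,\galgebrastar)$ I would observe that $\thetaalgebra$ is an ideal of $\mathfrak{f}$ which is also a coideal --- because $\thetaalgebra^{0}=\galgebrastar$ is a subalgebra of $\mathfrak{f}^*$ --- so fact (ii) makes $\mathfrak{f}/\thetaalgebra=\galgebra$ a quotient Lie bialgebra with dual $\thetaalgebra^{0}=\galgebrastar$ (the bracket being the restriction of that of $\thetaalgebrastar\ltimes\galgebrastar$ to its ideal $\galgebrastar$, i.e. the given bracket of $\galgebrastar$); alternatively one may simply invoke the previous proposition to exchange the two crossed modules and quote the $\thetaalgebra$-case for the Lie bialgebra crossed module $(\crossedmoduletriple{\galgebrastar}{-\phi^*}{\thetaalgebrastar},\crossedmoduletriple{\thetaalgebra}{\phi}{\galgebra})$, whose ``$\thetaalgebra$-role'' is played by $\galgebrastar$. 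In either case $(\galgebra,\galgebrastar)$ is a Lie bialgebra, completing the proof. If one prefers to avoid the sub/quotient-bialgebra machinery, the same two statements follow by verifying directly that $\delta_{\mathfrak{f}}$ restricts to a $1$-cocycle $\thetaalgebra\to\wedge^2\thetaalgebra$ and descends to a $1$-cocycle $\galgebra\to\wedge^2\galgebra$, the compatibility with the respective dual brackets being inherited from that of $\mathfrak{f}$ with $\mathfrak{f}^*$.
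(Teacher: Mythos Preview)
Your proof is correct and follows essentially the same idea as the paper's: both extract $(\thetaalgebra,\thetaalgebrastar)$ and $(\galgebra,\galgebrastar)$ from the ambient Lie bialgebra $(\galgebra\ltimes\thetaalgebra,\thetaalgebrastar\ltimes\galgebrastar)$ using that $\thetaalgebra$, $\thetaalgebrastar$, $\galgebra$, $\galgebrastar$ sit inside the semidirect products as subalgebras. Your version is considerably more explicit --- you verify via the annihilator correspondence that the cobracket actually restricts to $\thetaalgebra$ and descends to $\galgebra$, and you handle the $(\galgebra,\galgebrastar)$ case cleanly via the quotient (or equivalently by swapping the two crossed modules using the previous proposition) --- whereas the paper compresses all of this into ``since $\thetaalgebra$ and $\thetaalgebrastar$ are Lie subalgebras \ldots\ Similarly, $(\galgebra,\galgebrastar)$.''
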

\begin{proof}
Since $\thetaalgebra$ and $\thetaalgebrastar$ are
Lie subalgebras of $\Lietwo{\thetaalgebra}{\galgebra}$
and $\Lietwo{\galgebrastar}{\thetaalgebrastar}$, respectively,
and
$(\Lietwo{\thetaalgebra}{\galgebra},\Lietwo{\galgebrastar}{\thetaalgebrastar}
 )$ is a Lie bialgebra, it follows that
$(\thetaalgebra,\thetaalgebrastar)$ is a Lie bialgebra.
Similarly,
$(\galgebra,\galgebrastar)$  is also a Lie bialgebra.
\end{proof}

\begin{ex}
We can  construct a Lie bialgebra crossed module from
an ordinary Lie bialgebra as follows.
 Given a Lie bialgebra $(\thetaalgebra,\thetaalgebrastar)$,
consider  the trivial Lie algebra crossed module $\crossedmoduletriple{\thetaalgebra}{\Id}{\thetaalgebra}$,  where  the second $\thetaalgebra$ acts on the first
$\thetaalgebra$ by the adjoint action.
In the mean time,  consider the dual Lie algebra crossed module
$\crossedmoduletriple{\thetaalgebrastar}{-\Id }{ \thetaalgebrastar}$, where
the second
$\thetaalgebrastar$ is equipped with the opposite Lie bracket:
$-\bas{\tobefilledin}{\tobefilledin}$, and the action of the second
$\thetaalgebrastar$ on the first $\thetaalgebrastar$ is by
$ \kappa_2 \moduleaction { \kappa_1 }=  -\bas{\kappa_2}{\kappa_1} ,\quad\forall~
\kappa_1,\kappa_2\in\thetaalgebrastar$.
It is simple to see that $(\crossedmoduletriple{\thetaalgebra}{\Id}{\thetaalgebra}, \crossedmoduletriple{\thetaalgebrastar}{-\Id }{ \thetaalgebrastar})$
is a Lie bialgebra crossed module.
\end{ex}

\subsection{Main theorem}

We are now ready to state our main theorem.

\begin{thm}
\label{Thm:Liebialgebracm1-1strictLie2bialgebra}
There is a bijection between  Lie bialgebra crossed modules
and (strict) Lie 2-bialgebras.
\end{thm}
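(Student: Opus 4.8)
The plan is to unwind both sides of the claimed bijection into the common language of the big bracket. By Definition~\ref{Defn:Lie2bialgebra}, a strict Lie 2-bialgebra is a degree $(-4)$ element
\[
\varepsilon = \derivesbracket + \derivesaction + \derivescovarphi + \derivescobracket + \derivescoaction
\]
(the homotopy terms $\derivesh$ and $\derivescoh$ now being zero) with $\Poissonbracket{\varepsilon,\varepsilon}=0$. On the other hand, a Lie bialgebra crossed module is by Definition~\ref{Defn:bi-crossedmoduleofLiealgebra} the data of $\phi\colon\thetaalgebra\to\galgebra$, the $\galgebra$-action on $\thetaalgebra$, the Lie brackets on $\galgebra$ and $\galgebrastar$, and the $\thetaalgebrastar$-action on $\galgebrastar$, subject to: (i) $\crossedmoduletriple{\thetaalgebra}{\phi}{\galgebra}$ is a Lie algebra crossed module; (ii) $\crossedmoduletriple{\galgebrastar}{-\phi^*}{\thetaalgebrastar}$ is a Lie algebra crossed module; (iii) $(\Lietwo{\thetaalgebra}{\galgebra},\Lietwo{\galgebrastar}{\thetaalgebrastar})$ is a Lie bialgebra. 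First I would set up the correspondence at the level of raw data: given $\varepsilon$, define $\phi=\derivedby{\derivescovarphi}$ (note $\derivescovarphi=\derivesvarphi$ occupies the same space $\thetaalgebrastardegminustwo\symmetricproduct\galgebradegzero$), $\bracketmapbracket{\cdot}{\cdot}=\derivedby{\derivesbracket}$, $\actionmapaction{\cdot}{\cdot}=\derivedby{\derivesaction}$, and dually read $\derivescobracket$ and $\derivescoaction$ as the cobracket on $\thetaalgebra$ (equivalently the bracket on $\thetaalgebrastar$) and the $\thetaalgebrastar$-action on $\galgebrastar$ (equivalently a $\galgebrastar$-coaction). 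This is a linear isomorphism of data spaces by construction, exactly as in Proposition~\ref{Prop:Lie2algebraelements}.

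Next I would decompose the single equation $\Poissonbracket{\varepsilon,\varepsilon}=0$ into its homogeneous components according to the bigrading by $(\symmetricproduct^\bullet\galgebra)\symmetricproduct(\symmetricproduct^\bullet\galgebrastar)\symmetricproduct(\symmetricproduct^\bullet\thetaalgebra)\symmetricproduct(\symmetricproduct^\bullet\thetaalgebrastar)$ tracked by the superscript notation $\varepsilon^{pq}_{kl}$. Expanding $\Poissonbracket{\varepsilon,\varepsilon}$ gives a sum of cross-terms $\Poissonbracket{\cdot,\cdot}$ among the five pieces; grading considerations kill many of them, and each surviving bidegree yields one identity. The ``purely $l$'' components $\Poissonbracket{\derivesbracket,\derivesbracket}$, $\Poissonbracket{\derivesaction,\derivesaction}$, $\Poissonbracket{\derivesvarphi,\derivesbracket}$, $\Poissonbracket{\derivesvarphi,\derivesaction}$, $\Poissonbracket{\derivesbracket,\derivesaction}$ reproduce, via Proposition~\ref{Prop:Lie2algebraelements} with $\derivesh=0$, precisely the strict Lie 2-algebra axioms, which by Proposition~\ref{Lem:ensentialcrossedmoduleLiealgebra} and the remark following it are equivalent to ``$\crossedmoduletriple{\thetaalgebra}{\phi}{\galgebra}$ is a Lie algebra crossed module.'' Symmetrically, the ``purely $c$'' components give ``$\crossedmoduletriple{\galgebrastar}{-\phi^*}{\thetaalgebrastar}$ is a Lie algebra crossed module'' by Proposition~\ref{Prop:Lie2coalgebraelements}. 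The remaining mixed components — those pairing a $\derives$-piece of $l$ with a $\derives$-piece of $c$, such as $\Poissonbracket{\derivesbracket,\derivescobracket}$, $\Poissonbracket{\derivesbracket,\derivescoaction}$, $\Poissonbracket{\derivesaction,\derivescobracket}$, $\Poissonbracket{\derivesaction,\derivescoaction}$, $\Poissonbracket{\derivesvarphi,\derivescobracket}$, $\Poissonbracket{\derivesvarphi,\derivescoaction}$, together with $\Poissonbracket{\derivesbracket,\derivescovarphi}$ and $\Poissonbracket{\derivesaction,\derivescovarphi}$ already absorbed above — are exactly the compatibility conditions expressing that the semidirect products fit together into a Lie bialgebra. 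Here the key computational identity is Eq.~\eqref{Eq:bigbracket}, which lets me evaluate each $\derivedby{\Poissonbracket{\cdot,\cdot}}$ on tuples of elements of $\galgebra$ and $\thetaalgebra$ and match it term-by-term against the cocycle / compatibility equations one writes down directly for $(\Lietwo{\thetaalgebra}{\galgebra},\Lietwo{\galgebrastar}{\thetaalgebrastar})$.

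Then I would invoke the standard fact (essentially Drinfeld's, recast in big-bracket form by Kosmann-Schwarzbach) that a Lie bialgebra structure on a Lie algebra $\mathfrak{h}$ with bracket $\mu$ and cobracket $\gamma$ is the same as $\Poissonbracket{\mu+\gamma,\mu+\gamma}=0$ in the (ordinary, unshifted) big bracket on $\mathfrak{h}\oplus\mathfrak{h}^*$; applied to $\mathfrak{h}=\Lietwo{\thetaalgebra}{\galgebra}$, whose underlying bialgebra data is precisely the assembled $\varepsilon$, this shows that the mixed components of $\Poissonbracket{\varepsilon,\varepsilon}=0$ are equivalent to condition~(iii). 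Conversely, running the same computation backwards, a Lie bialgebra crossed module produces an $\varepsilon$ with $\Poissonbracket{\varepsilon,\varepsilon}=0$. Finally I would check that the two assignments are mutually inverse, which is immediate since both are identities on the underlying linear data. The main obstacle is bookkeeping: verifying that the bigraded components of $\Poissonbracket{\varepsilon,\varepsilon}$ match the crossed-module and Lie-bialgebra axioms with all Koszul signs correct, and in particular confirming that there are no \emph{extra} components of $\Poissonbracket{\varepsilon,\varepsilon}$ beyond those listed — i.e.\ that the restriction to the specified summands in Definition~\ref{Defn:Lie2bialgebra} (no $\symmetricproduct^2\galgebra$, $\symmetricproduct^2\theta^*$, etc.\ terms) is exactly what makes the equivalence clean. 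I expect the sign-tracking in the mixed terms, where the degree-$3$ shift in the Schouten bracket interacts with the $\ZZ$-grading on $\shiftby{V}{2}\oplus\shiftby{\Vs}{1}$, to be the delicate part; the rest reduces to the already-established Propositions~\ref{Prop:Lie2algebraelements} and~\ref{Prop:Lie2coalgebraelements} and a diagram of which bidegrees survive.
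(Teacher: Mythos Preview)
Your overall architecture matches the paper's: split $\Poissonbracket{\varepsilon,\varepsilon}=0$ by bidegree, identify the pure-$l$ and pure-$c$ parts with the two crossed-module conditions via Propositions~\ref{Prop:Lie2algebraelements} and~\ref{Prop:Lie2coalgebraelements}, and match the mixed part with the Lie bialgebra compatibility. But the step where you handle the mixed part has a genuine gap. You write that the mixed components are equivalent to condition~(iii) by ``the standard fact\ldots applied to $\mathfrak{h}=\Lietwo{\thetaalgebra}{\galgebra}$, whose underlying bialgebra data is precisely the assembled $\varepsilon$.'' This does not go through as stated. First, the big bracket here is \emph{not} the ordinary one (Remark~2.1; the paper even notes that ordinary Lie bialgebras are not Lie 2-bialgebras in this sense), so you cannot simply transplant the Kosmann-Schwarzbach formulation. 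Second, and more seriously, $\varepsilon$ does \emph{not} encode the full bialgebra data on $\Lietwo{\thetaalgebra}{\galgebra}$: the $\thetaalgebra$--$\thetaalgebra$ bracket $[u,v]=\phi(u)\moduleaction v$ is a derived quantity (built from $\derivescovarphi$ and $\derivesaction$), not a summand of $\varepsilon$; likewise the $\wedge^2\galgebra$-component $\pi=-\tfrac{1}{2}D_\phi\circ\delta$ of the cobracket $\partial$ (Lemma~\ref{prop:coliealgebracrossedmoduledata}, Eq.~\eqref{Eqt:partialx+u}) is derived from $\derivescovarphi$ and $\derivescoaction$. So the element $\mu+\gamma$ in the standard big bracket on $\mathfrak{h}\oplus\mathfrak{h}^*$ is not literally $\varepsilon$, and the cocycle condition for $\partial$ involves pieces that have no direct counterpart among your mixed terms.

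The paper closes this gap via Lemma~\ref{prop:coliealgebracrossedmoduledata} and Proposition~\ref{prop:liebialgebracrossedmoduledata}: it first rewrites the dual crossed module in terms of the pair $(\delta,\omega)$ and the derived $\pi$, and then proves that the full Lie-bialgebra cocycle condition on $\partial$ (with its $\pi$-piece) reduces to just two conditions on $(\delta,\omega)$ alone --- $\delta$ a $1$-cocycle, and the compatibility $x\moduleaction\omega(u)-\omega(x\moduleaction u)=\mathrm{Pr}_{\wedge^2\thetaalgebra}([u,\delta(x)])$. Only after this reduction does the surviving mixed component, namely $\Poissonbracket{\derivesbracket,\derivescoaction}+\Poissonbracket{\derivesaction,\derivescobracket}+\Poissonbracket{\derivesaction,\derivescoaction}=0$, match line by line. (Note also that $\Poissonbracket{\derivesbracket,\derivescobracket}=0$ automatically, and that $\Poissonbracket{\derivescovarphi,\derivescobracket}$, $\Poissonbracket{\derivescovarphi,\derivescoaction}$ belong to $\Poissonbracket{c,c}$, not to the mixed block, since $\derivescovarphi$ is shared.) Your proposal needs either this reduction or an explicit comparison between the two big brackets; the phrase ``whose underlying bialgebra data is precisely the assembled $\varepsilon$'' is exactly where the content of Lemma~\ref{prop:coliealgebracrossedmoduledata} and Proposition~\ref{prop:liebialgebracrossedmoduledata} is hiding.
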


\begin{rmk} {In fact, the collection of Lie bialgebra crossed modules form a strict 2-category and so does that of Lie 2-bialgebras.   The above theorem can be enhanced to an equivalence of these  2-categories.  This will be investigatedsomewhere else.}
\end{rmk}
We need a few lemmas before proving this theorem.

For $k\geq 1$,  write
\begin{equation}\label{Eqt:Wk}
W_k  =\set{w\in
\galgebra\wedge(\wedge^{k-1}\thetaalgebra)~|~\inserts_{\zeta_1}\inserts_{\phi^*\zeta_2}~w
=- \inserts_{\zeta_2}\inserts_{\phi^*\zeta_1}~w,~ \ \forall \zeta_1, \zeta_2\in
\galgebrastar}.
\end{equation}

Let
$$
\phipush:~\quad \wedge^\bullet(\crossedmodulealgebra)\lon
\wedge^\bullet(\crossedmodulealgebra)
$$
denote  the degree-$0$   derivation with respect to the wedge product
such that $\phipush( x +u)=\phi(u)$,  $\forall  x \in\galgebra$, $u\in\thetaalgebra$.

\begin{lem}
\label{prop:coliealgebracrossedmoduledata}
A Lie algebra crossed module structure  on
 $\crossedmoduletriple{\galgebrastar}{\phiUprotate
}{\thetaalgebrastar} $, where $\phiUprotate=-\phi^*$, is equivalent to
a pair of linear  maps
$\skewdelta:~\galgebra\lon W_2\subset \galgebra\wedge \thetaalgebra$   and $\domega:~\thetaalgebra\lon \wedge^2 \thetaalgebra$ satisfying the following conditions
\begin{itemize}
\item[1)]$\phipush\smalcirc\omega=\skewdelta \smalcirc \phi$;
    \item[2)]$\domega^2=0$;
    \item[3)] $(\domega+\skewdelta)\circ\skewdelta=0$.
\end{itemize}
Here,  we consider both  $\domega$ and $\skewdelta$ as
degree-$1$ derivations on the exterior algebra
$\wedge^\bullet(\crossedmodulealgebra)$,  by letting
$\domega|_{\galgebra}=0$ and  $\skewdelta|_{\thetaalgebra}=0$.

 Moreover in this case,
the cobracket $\partial:~\Lietwo{\thetaalgebra}{\galgebra}\lon
\wedge^2 (\Lietwo{\thetaalgebra}{\galgebra})$ corresponding
to  the Lie algebra structure on
$\Lietwo{\galgebrastar}{\thetaalgebrastar}$ is given by:
\begin{equation}\label{Eqt:partialx+u}
\partial(x+u)=\domega(u)+\skewdelta(x)+\dpi(x),\qquad\forall x\in \galgebra,u\in \thetaalgebra,
\end{equation}
where $\dpi$ is a linear map $~\galgebra\lon \wedge^2 \galgebra$ given by
$ \dpi=-\thalf \phipush \circ \skewdelta$.
\end{lem}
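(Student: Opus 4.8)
The plan is to set up a dictionary by dualization, using Proposition~\ref{Lem:ensentialcrossedmoduleLiealgebra} to cut the amount of data down to a manageable size. First I would apply that proposition to the prospective crossed module $\crossedmoduletriple{\galgebrastar}{-\phi^*}{\thetaalgebrastar}$: giving such a structure is the same as choosing a Lie bracket $\ba{\cdot}{\cdot}_{\thetaalgebrastar}$ on $\thetaalgebrastar$ and a $\thetaalgebrastar$-module structure $\moduleaction$ on $\galgebrastar$ subject to
\[
\phi^*(\xi\moduleaction\zeta)=\ba{\xi}{\phi^*\zeta}_{\thetaalgebrastar},
\qquad
\phi^*(\zeta)\moduleaction\zeta'+\phi^*(\zeta')\moduleaction\zeta=0,
\qquad \xi\in\thetaalgebrastar,\ \zeta,\zeta'\in\galgebrastar ;
\]
the bracket on $\galgebrastar$ is then forced to equal $\ba{\zeta}{\zeta'}_{\galgebrastar}=-\phi^*(\zeta)\moduleaction\zeta'$, and $\Lietwo{\galgebrastar}{\thetaalgebrastar}$ is automatically a Lie algebra.

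Next I would transpose all of this, viewing the transposes as degree-$1$ derivations of $\wedge^\bullet(\crossedmodulealgebra)$ (extending by $0$ across $\galgebra$, resp.\ $\thetaalgebra$). The bracket $\ba{\cdot}{\cdot}_{\thetaalgebrastar}$ transposes to a map $\domega\colon\thetaalgebra\to\wedge^2\thetaalgebra$, and its Jacobi identity to the co-Jacobi identity $\domega^2=0$, which is condition~2). The action $\moduleaction$ transposes, under $\galgebra\otimes\thetaalgebra\cong\galgebra\wedge\thetaalgebra\subset\wedge^2(\crossedmodulealgebra)$, to a map $\skewdelta\colon\galgebra\to\galgebra\wedge\thetaalgebra$, and the module axiom transposes --- modulo $\domega^2=0$ --- exactly to $(\domega+\skewdelta)\circ\skewdelta=0$, which is condition~3). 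Contracting $\skewdelta(x)$ against $\zeta_1,\zeta_2\in\galgebrastar$ and using $\langle u,\phi^*\zeta\rangle=\langle\phi u,\zeta\rangle$, one sees that the second displayed identity above says precisely that $\skewdelta(x)\in W_2$ for every $x\in\galgebra$; the same computation applied to the forced bracket $\ba{\cdot}{\cdot}_{\galgebrastar}=-\phi^*(\cdot)\moduleaction(\cdot)$ shows that its transpose $\dpi\colon\galgebra\to\wedge^2\galgebra$ equals $-\thalf\,\phipush\circ\skewdelta$. Finally, pairing the first displayed identity with an arbitrary $u\in\thetaalgebra$ and moving $\phi$ across the pairing turns it into $\phipush\circ\domega=\skewdelta\circ\phi$, which is condition~1). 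All of these translations are reversible, which gives the asserted equivalence.

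For the cobracket formula~\eqref{Eqt:partialx+u} I would compute the cobracket $\partial$ of $\crossedmodulealgebra=\Lietwo{\thetaalgebra}{\galgebra}$ dual to the semidirect product bracket of $\Lietwo{\galgebrastar}{\thetaalgebrastar}$, block by block relative to the decomposition $\wedge^2(\crossedmodulealgebra)=\wedge^2\galgebra\oplus(\galgebra\wedge\thetaalgebra)\oplus\wedge^2\thetaalgebra$. Because $\galgebrastar$ is an ideal and $\thetaalgebrastar$ a subalgebra, $\partial$ maps $\thetaalgebra$ into $\wedge^2\thetaalgebra$ and equals $\domega$ there; on $\galgebra$, the $\wedge^2\thetaalgebra$-component vanishes since $\langle x,\ba{\xi}{\xi'}_{\thetaalgebrastar}\rangle=0$, the $\galgebra\wedge\thetaalgebra$-component is the transpose of the action, i.e.\ $\skewdelta(x)$, and the $\wedge^2\galgebra$-component is the transpose of $\ba{\cdot}{\cdot}_{\galgebrastar}$, which by the previous paragraph is $\dpi(x)=-\thalf\,\phipush(\skewdelta(x))$. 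Summing the three components gives $\partial(x+u)=\domega(u)+\skewdelta(x)+\dpi(x)$.

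The principal obstacle is the sign bookkeeping in the transposition, above all verifying that the $\thetaalgebrastar$-module axiom transposes exactly to $(\domega+\skewdelta)\circ\skewdelta=0$ once $\domega$ and $\skewdelta$ have been extended as derivations, which forces one to unwind the derivation extensions carefully. It is also worth stating explicitly that no further constraints arise: the remaining Jacobi identities of $\Lietwo{\galgebrastar}{\thetaalgebrastar}$ --- those among three elements of $\galgebrastar$, and among one element of $\thetaalgebrastar$ and two of $\galgebrastar$ --- hold automatically by Proposition~\ref{Lem:ensentialcrossedmoduleLiealgebra}, so that conditions~1)--3) together with $\skewdelta(\galgebra)\subseteq W_2$ are exactly the right list. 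Everything else is routine linear algebra.
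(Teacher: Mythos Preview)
Your proposal is correct and follows essentially the same approach as the paper: both reduce the crossed module data via Proposition~\ref{Lem:ensentialcrossedmoduleLiealgebra} to a Lie bracket on $\thetaalgebrastar$ and a $\thetaalgebrastar$-action on $\galgebrastar$ subject to the two compatibility conditions, dualize these to $\domega$ and $\skewdelta$, and then match each condition with one of 1)--3) and the $W_2$ constraint; the cobracket formula is likewise obtained block by block by pairing $\partial$ against $\kappa_1\wedge\kappa_2$, $\kappa\wedge\xi$, and $\xi\wedge\zeta$. The only minor remark is that the qualifier ``modulo $\domega^2=0$'' in your transposition of the module axiom is unnecessary: since $\skewdelta(x)\in\galgebra\wedge\thetaalgebra$ and $\domega|_\galgebra=0$, $\skewdelta|_\thetaalgebra=0$, the expression $(\domega+\skewdelta)\circ\skewdelta(x)$ lands directly in $\galgebra\wedge\wedge^2\thetaalgebra$ and encodes the action axiom on its own.
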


\begin{proof}
According to   Proposition  \ref{Lem:ensentialcrossedmoduleLiealgebra},
a Lie algebra crossed module structure underlying
$\crossedmoduletriple{\galgebrastar}{\phiUprotate
}{\thetaalgebrastar} $ is equivalent  to
assigning a Lie algebra structure $\bas{\tobefilledin}{\tobefilledin}$ on $\thetaalgebrastar$ and an action $\moduleaction$ of $\thetaalgebrastar$ on $\galgebrastar$ such that
\begin{eqnarray}\label{Eqn:tp3}
&\phiUprotate ( \kappa \moduleaction \xi)=\ba{\kappa }{\phiUprotate
(\xi)}_*,\\\label{Eqn:tp4} &{\phiUprotate (\xi)} \moduleaction \zeta=-{\phiUprotate (\zeta)}
\moduleaction  \xi ,
\end{eqnarray}for all $ ~ \xi,\zeta\in
\galgebrastar,~  \kappa \in \thetaalgebrastar$.
 Introduce  linear maps  $\skewdelta$ and $\domega$ by
\begin{eqnarray*}
\pairing{\skewdelta(x)}{\xi\wedge\kappa}&=&\pairing{x}{\kappa\moduleaction \xi};\\
\pairing{\domega(u)}{\kappa_1\wedge \kappa_2}&=& -\pairing{u}{\bas{\kappa_1}{\kappa_2}},
\end{eqnarray*}
$\forall x\in\galgebra$, $u\in \thetaalgebra$, $\xi\in\galgebrastar$, $\kappa, \kappa_1$, $\kappa_2\in \thetaalgebrastar$.
It is simple to see that $\domega^2=0$ is equivalent to
the Jacobi identity for $\bas{\tobefilledin}{\tobefilledin}$,
and $(\domega+\skewdelta)\circ\skewdelta=0$ is equivalent to
that $\moduleaction$ is an action of $\thetaalgebrastar$ on $\galgebrastar$. Moreover, Eq.~\eqref{Eqn:tp3} is equivalent  to the condition $\phipush\smalcirc\omega=\skewdelta \smalcirc \phi$ and Eq.~\eqref{Eqn:tp4} is equivalent to the condition that $\skewdelta$ takes values in $W_2$.

To  prove  Eq.~\eqref{Eqt:partialx+u}, we have,
 $\forall x\in\galgebra$, $u\in \thetaalgebra$, $\xi,\zeta\in\galgebrastar$, $\kappa, \kappa_1$, $\kappa_2\in \thetaalgebrastar$,
\begin{eqnarray*}
\pairing{\partial(u)}{\kappa_1\wedge\kappa_2}&=&
-\pairing{u}{\bas{\kappa_1}{\kappa_2}}=\pairing{\domega(u)}{\kappa_1\wedge \kappa_2},\\
\pairing{\partial(x)}{\kappa\wedge\xi}&=&
-\pairing{x}{\bas{\kappa }{\xi}}=-\pairing{x}{\kappa\moduleaction \xi}=\pairing{\skewdelta(x)}{ \kappa\wedge \xi},
\end{eqnarray*}
and
\begin{eqnarray*}
\pairing{\partial(x)}{\xi\wedge \zeta}&=&-\pairing{x}{\bas{\xi}{\zeta}}\\
&=&-\pairing{x}{ {\phiUprotate(\xi)}\moduleaction {\zeta}}\\
&=& \pairing{\skewdelta(x)}{{\phiUprotate(\xi)}\wedge {\zeta}}\\
&=&\pairing{-\thalf ( \phipush  \circ \skewdelta) (x)}{\xi\wedge \zeta}.
\end{eqnarray*}
The conclusion thus follows.
\end{proof}

\begin{prop}
\label{prop:liebialgebracrossedmoduledata}
Let $\crossedmoduletriple{\thetaalgebra}{\phi}{\galgebra} $ be
 a Lie algebra crossed module.
  It is a  Lie bialgebra crossed module
if and only if there
is a pair of linear maps $(\skewdelta,\domega)$ as in  Lemma
 \ref{prop:coliealgebracrossedmoduledata}
that, in addition,  satisfies the following  conditions:
\begin{itemize}
\item[1)] $\delta$ is a Lie algebra $1$-cocycle;
\item[2)] $x \moduleaction\omega(u)-\omega( x \moduleaction
u)=\mathrm{Pr}_{\wedge^2\thetaalgebra}(\ba{u}{\skewdelta(  x )})$,
for all $ x \in \galgebra$, $u\in\thetaalgebra$.
\end{itemize}
\end{prop}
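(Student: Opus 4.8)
The plan is to strip off everything in the definition of a Lie bialgebra crossed module that is already supplied by the hypothesis and by Lemma~\ref{prop:coliealgebracrossedmoduledata}, leaving a single cocycle condition, and then to analyse that condition according to the bidegree on $\galgebra\oplus\thetaalgebra$. By Definition~\ref{Defn:bi-crossedmoduleofLiealgebra}, a Lie bialgebra crossed module with underlying $\crossedmoduletriple{\thetaalgebra}{\phi}{\galgebra}$ is the same as the datum of a Lie algebra crossed module structure on $\crossedmoduletriple{\galgebrastar}{-\phi^*}{\thetaalgebrastar}$ such that $(\Lietwo{\thetaalgebra}{\galgebra},\Lietwo{\galgebrastar}{\thetaalgebrastar})$ is a Lie bialgebra. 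By Lemma~\ref{prop:coliealgebracrossedmoduledata}, the first datum is exactly a pair $(\skewdelta,\domega)$ satisfying conditions 1)--3) of that lemma, and the cobracket on $\Lietwo{\thetaalgebra}{\galgebra}$ dual to the resulting bracket on $\Lietwo{\galgebrastar}{\thetaalgebrastar}$ is $\partial(x+u)=\domega(u)+\skewdelta(x)+\dpi(x)$ with $\dpi=-\thalf\phipush\circ\skewdelta$. Since $\Lietwo{\thetaalgebra}{\galgebra}$ is already a Lie algebra by hypothesis and $\Lietwo{\galgebrastar}{\thetaalgebrastar}$ is already a Lie algebra by the first datum, the Lie bialgebra property reduces to the single compatibility condition that $\partial$ be a $1$-cocycle of $\Lietwo{\thetaalgebra}{\galgebra}$ with values in $\wedge^2(\Lietwo{\thetaalgebra}{\galgebra})$ under the adjoint action, i.e.\ $\partial\ba{X}{Y}=\ba{X}{\partial Y}-\ba{Y}{\partial X}$ for all $X,Y$. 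So the proposition amounts to proving that, given the maps of Lemma~\ref{prop:coliealgebracrossedmoduledata}, this cocycle identity is equivalent to conditions 1) and 2).

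I would establish that equivalence by testing the cocycle identity on the three types of pairs $(x,y)$, $(x,u)$, $(u,v)$ with $x,y\in\galgebra$ and $u,v\in\thetaalgebra$, and projecting the result onto the three summands of $\wedge^2(\galgebra\oplus\thetaalgebra)=\wedge^2\galgebra\oplus(\galgebra\wedge\thetaalgebra)\oplus\wedge^2\thetaalgebra$. One uses that $\ba{x}{-}$ preserves this trigrading, that $\ba{u}{-}$ sends $\wedge^2\galgebra$ into $\galgebra\wedge\thetaalgebra$ and $\galgebra\wedge\thetaalgebra$ into $(\galgebra\wedge\thetaalgebra)\oplus\wedge^2\thetaalgebra$, and that $\partial|_\galgebra=\skewdelta+\dpi$ while $\partial|_\thetaalgebra=\domega$. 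A direct count then gives the following. The $\galgebra\wedge\thetaalgebra$-component of the $(x,y)$-identity is precisely condition 1), that $\skewdelta$ is a Lie algebra $1$-cocycle on $\galgebra$. The $\wedge^2\galgebra$-component of the $(x,y)$-identity says that $\dpi$ is a $1$-cocycle, which is automatic from condition 1) because $\phipush$ commutes with the adjoint $\galgebra$-action --- a derivation check using the crossed module axiom $\phi(x\moduleaction u)=\ba{x}{\phi(u)}$ --- and $\dpi=-\thalf\phipush\circ\skewdelta$. The $\wedge^2\thetaalgebra$-component of the $(x,u)$-identity, once one observes that $\ba{u}{\dpi(x)}$ has no $\wedge^2\thetaalgebra$-part, reads $x\moduleaction\domega(u)-\domega(x\moduleaction u)=\mathrm{Pr}_{\wedge^2\thetaalgebra}\ba{u}{\skewdelta(x)}$, which is condition 2). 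The remaining projections are either trivial or subsumed by these.

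The two nontrivial ``automatic'' projections are where the actual work lies. For the $\galgebra\wedge\thetaalgebra$-component of the $(x,u)$-identity one writes $\skewdelta(x)=\sum_i a_i\wedge b_i$ with $a_i\in\galgebra$, $b_i\in\thetaalgebra$, so that it becomes $\sum_i a_i\wedge(\phi(u)\moduleaction b_i)+\mathrm{Pr}_{\galgebra\wedge\thetaalgebra}\ba{u}{\dpi(x)}=0$; expanding $\dpi(x)=-\thalf\sum_i a_i\wedge\phi(b_i)$ and the brackets by the crossed module axioms, then applying $\mathrm{id}\otimes(\,\cdot\,\moduleaction u)$ to the defining relation of $W_2$ for $\skewdelta(x)$ to trade $\sum_i\phi(b_i)\wedge(a_i\moduleaction u)$ for $-\sum_i a_i\wedge(\phi(b_i)\moduleaction u)$, reduces the left-hand side to $\sum_i a_i\wedge(\phi(u)\moduleaction b_i+\phi(b_i)\moduleaction u)$, which vanishes by the axiom $\phi(u)\moduleaction v=-\phi(v)\moduleaction u$. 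The $(u,v)$-identity lives entirely in $\wedge^2\thetaalgebra$, because $\ba{u}{-}$ preserves $\wedge^2\thetaalgebra$ and $\partial\ba{u}{v}=\domega(\phi(u)\moduleaction v)$, and it asserts that $\domega$ is a $1$-cocycle of the Lie algebra $\thetaalgebra$, equivalently that $(\thetaalgebra,\thetaalgebrastar)$ is a Lie bialgebra; this I would deduce from condition 2) specialized to $x=\phi(v)$, together with the relation $\phipush\circ\domega=\skewdelta\circ\phi$ of Lemma~\ref{prop:coliealgebracrossedmoduledata} and the identity $\ba{u}{\phi(w)}=-\phi(w)\moduleaction u$, which let one rewrite $\mathrm{Pr}_{\wedge^2\thetaalgebra}\ba{u}{\phipush(\domega(v))}$ as $\ba{u}{\domega(v)}$. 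The main obstacle is not conceptual but is one of bookkeeping: keeping the graded signs and the threefold projections straight under the $\thetaalgebra$-adjoint action on $\wedge^2(\galgebra\oplus\thetaalgebra)$, and invoking the $W_2$-constraint at precisely the right step.
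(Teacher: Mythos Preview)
Your proposal is correct and follows essentially the same route as the paper's proof: both reduce the Lie bialgebra condition to the single $1$-cocycle identity $\partial\ba{E}{F}=\ba{E}{\partial F}-\ba{F}{\partial E}$ on $\Lietwo{\thetaalgebra}{\galgebra}$, test it on the three types of pairs $(x,y)$, $(x,u)$, $(u,v)$, use the $W_2$-constraint on $\skewdelta(x)$ to show that the ``cross'' component of the $(x,u)$-identity vanishes automatically, and derive the $(u,v)$-case from condition~2) specialized along $\phi$ together with $\phipush\circ\domega=\skewdelta\circ\phi$. The only difference is one of presentation: you decompose more systematically along the trigrading $\wedge^2\galgebra\oplus(\galgebra\wedge\thetaalgebra)\oplus\wedge^2\thetaalgebra$ and explicitly isolate the $\wedge^2\galgebra$-component of the $(x,y)$-identity (the cocycle property of $\dpi$, deduced from that of $\skewdelta$ via the commutation of $\phipush$ with the $\galgebra$-adjoint action), whereas the paper folds this into ``it is simple to see'' and instead packages the $(x,u)$-computation as the single identity $\ba{u}{\skewdelta(x)-\tfrac12(\phipush\circ\skewdelta)(x)}=\mathrm{Pr}_{\wedge^2\thetaalgebra}\ba{u}{\skewdelta(x)}$.
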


\begin{proof}
Assume that $\crossedmoduletriple{\thetaalgebra}{\phi}{\galgebra} $ and
$\crossedmoduletriple{\galgebrastar}{\phiUprotate }{\thetaalgebrastar} $ are
  Lie algebra crossed modules.

It suffices to prove that Conditions 1)-2) are equivalent to
$(\Lietwo{\thetaalgebra}{\galgebra},\Lietwo{\galgebrastar}{\thetaalgebrastar}
 )$ being  a Lie bialgebra. The latter is equivalent to
 \begin{equation}\label{Eqt:temp1}
\partial{ \ba{E}{F} }=\ba{E}{\partial(F)}-\ba{F}{\partial(E)},\quad\forall~
E,F\in \Lietwo{\thetaalgebra}{\galgebra},
\end{equation}
 where $\partial: \Lietwo{\thetaalgebra}{\galgebra}
\to \wedge^2 (\Lietwo{\thetaalgebra}{\galgebra})$ is the cobracket
 as given in Eq.~\eqref{Eqt:partialx+u}.

If  both  $E $ and  $F$ are in $ \galgebra$, it is simple to see that
 Eq. \eqref{Eqt:temp1}
is equivalent to  $\skewdelta: \galgebra\to \galgebra\wedge \thetaalgebra$
  being a Lie algebra $1$-cocycle.
On the other hand, we claim that when
  $E=x\in \galgebra$ and  $F=u\in \thetaalgebra$, Eq. \eqref{Eqt:temp1}
is equivalent to the second condition in the statement of the proposition.
First of all, note that Eq. \eqref{Eqt:temp1} implies that
\begin{eqnarray}\nonumber
\partial \ba{x}{u}&=&\ba{x}{\partial(u)}-\ba{u}{\partial(x)}
\label{eqn:temp3} \\
&=&x\moduleaction (\domega (u))-\ba{u}{\skewdelta(x)- \thalf
(\phipush\circ\skewdelta)(x)}.
\end{eqnarray}
Since $\partial \ba{x}{u}=\domega (x\moduleaction u)$, it suffices to
prove that $\ba{u}{\skewdelta(x)-\thalf (\phipush\circ\skewdelta)(x)}=
\mathrm{Pr}_{\wedge^2\thetaalgebra}(\ba{u}{\skewdelta(  x )})$.

For this purpose, let us assume that $\delta(x)= \sum_{i} y_i\wedge v_i$,
where $y_i\in \galgebra$ and  $v_i\in \thetaalgebra$.
The condition  that $\delta(x)\in W_2$ is essentially equivalent to
\begin{equation}\label{Eqt:temp4}
\sum_i \bigl(\pairing{\phi(v_i)}{\xi}y_i+ \pairing{y_i}{\xi}\phi(v_i)\bigr)=0,\quad\forall \xi\in \galgebrastar.
\end{equation}
The latter implies that, for any $u\in \thetaalgebra$,
\[ \sum_i\bigl( (y_i\moduleaction u)\wedge \phi(v_i)-y_i\wedge (\phi(v_i)\moduleaction u)\bigr)=0 .\]
Indeed, for all $  \xi\in \galgebrastar$,
\begin{eqnarray*}
\inserts_{\xi} \Bigl( \sum_i\bigl((y_i\moduleaction u)\wedge \phi(v_i)-y_i\wedge (\phi(v_i)\moduleaction u)\bigr)\Bigr)&=&-\sum_i\bigl( (y_i\moduleaction u)\pairing{ \phi(v_i)}{\xi}+\pairing{y_i}{\xi} (\phi(v_i)\moduleaction u)\bigr)\\
&=&-\sum_i \bigl(\pairing{\phi(v_i)}{\xi}y_i+ \pairing{y_i}{\xi}\phi(v_i)\bigr)\moduleaction u\\
&=&0.
\end{eqnarray*}
Thus, we have
  \begin{eqnarray*}
  \ba{u}{\skewdelta(x)-\thalf (\phipush\circ\skewdelta)(x)}&=&\ba{u}{\sum_{i} y_i\wedge v_i-\thalf \sum_{i} y_i\wedge \phi(v_i)}\\
  &=& \sum_{i}\bigl(\ba{u}{y_i}\wedge v_i
  +y_i\wedge \ba{u}{v_i}+\thalf( y_i\moduleaction u) \wedge \phi(v_i)+\thalf y_i\wedge  ({\phi(v_i)}\moduleaction {u})
  \bigr)\\
  &=&\sum_{i}\ba{u}{y_i}\wedge v_i+\thalf\sum_i\bigl(
  (y_i\moduleaction u)\wedge \phi(v_i)-y_i\wedge (\phi(v_i)\moduleaction u)
  \bigr)\\
  &=&\sum_{i}\ba{u}{y_i}\wedge v_i\\
&=&\mathrm{Pr}_{\wedge^2\thetaalgebra}(\ba{u}{\skewdelta(  x )}).
  \end{eqnarray*}

Finally, if both $E$ and $F$ are in $\thetaalgebra$, Eq. \eqref{Eqt:temp1}
is equivalent to $\omega: \thetaalgebra\to \wedge^2 \thetaalgebra$
being a Lie algebra 1-cocycle. However, the latter follows from
Conditions 1)-2).  To see this, for any $u, v\in \thetaalgebra$,
we have
\begin{eqnarray*}
\domega\ba{u}{v}&=&\domega(\phi(u)\moduleaction v)\\
&=& \phi(u)\moduleaction \domega(v)-\mathrm{Pr}_{\wedge^2\thetaalgebra}(\ba{v}{\skewdelta(  \phi(u) )})\\
&=&
\ba{u}{\domega(v)}-\mathrm{Pr}_{\wedge^2\thetaalgebra}(\ba{v}{\phipush(\omega(u))})\\
&=&\ba{u}{\domega(v)}-\ba{v}{\domega(u)}.
\end{eqnarray*}
Here in the last  equality,  we used the following identity
\[ \mathrm{Pr}_{\wedge^2\thetaalgebra}(\ba{v}{\phipush(\zeta )})=
\ba{v}{\zeta }, \ \ \ \forall \zeta\in \wedge^2 \theta ,\]
which can be  proved by a direct verification.

This concludes the proof.
\end{proof}

Now we  are ready to prove Theorem \ref{Thm:Liebialgebracm1-1strictLie2bialgebra}.

\begin{proof}[Proof of Theorem \ref{Thm:Liebialgebracm1-1strictLie2bialgebra}]
Let  $\varepsilon=  \derivesbracket+
\derivesaction +\derivescovarphi+\derivescobracket+ \derivescoaction \in
 \degreesubspace{\mathcal{S}}{-4}$, where  $\derivesbracket,\derivesaction,  \derivescovarphi,\derivescobracket,\derivescoaction $
are given as in Eqs. \eqref{eqt:derivesdata} and \eqref{eqt:derivescodata}.
 It is simple to see that  the equation $\{\varepsilon, \varepsilon\}=0$
is equivalent to the following three  equations:
\begin{eqnarray}\label{Eqt:tempa}
\Poissonbracket{\derivesbracket+
\derivesaction +\derivescovarphi,\derivesbracket+
\derivesaction +\derivescovarphi}&=&0;\\
\label{Eqt:tempb}
\Poissonbracket{\derivescovarphi+\derivescobracket+
\derivescoaction,\derivescovarphi+\derivescobracket+
\derivescoaction} &=&0;\\
\label{Eqt:tempc}
\Poissonbracket{\derivesbracket,\derivescoaction}
+\Poissonbracket{\derivesaction,\derivescobracket}
+\Poissonbracket{\derivesaction,\derivescoaction}
&=&0.
\end{eqnarray}
According to Proposition \ref{Prop:Lie2algebraelements},
Eq. \eqref{Eqt:tempa} is equivalent to
  that $\crossedmoduletriple{\thetaalgebra}{\phi}{\galgebra}$
 is a Lie algebra crossed module,  where
$\phi(u)=\derivedby{\derivesvarphi}(u),\quad\forall~ u\in \thetaalgebra$,
the Lie bracket on $\galgebra$ is given by
$\ba{x}{y}=\derivedby{\derivesbracket}(x,y),\quad\forall~ x,y\in \galgebra$,
and the $\galgebra$-action   on $\thetaalgebra$ is given by
$x\moduleaction  u=\derivedby{\derivesaction}(x,u),
\quad\forall x\in \galgebra,u\in \thetaalgebra$.
According to Proposition \ref{Prop:Lie2coalgebraelements},
Eq. \eqref{Eqt:tempb} is equivalent to that
$\crossedmoduletriple{\thetaalgebra}{\phi}{\galgebra}$
 is a Lie 2-coalgebra, or
$\crossedmoduletriple{\galgebrastar}{\phiUprotate
}{\thetaalgebrastar} $ is a Lie algebra crossed module.
It is simple to show, by a  straightforward computation,
that the linear maps $\skewdelta$ and $\domega$ associated
to the   Lie algebra crossed module
 $\crossedmoduletriple{\galgebrastar}{\phiUprotate
}{\thetaalgebrastar} $ are related to
 $\derivescobracket$ and $\derivescoaction$ by the following relations:
$$
\pairing{\skewdelta(x)}{\xi\wedge\kappa}=\Poissonbracket{x,\Poissonbracket{\Poissonbracket{\derivescoaction,\kappa},\xi}}=
 -\Poissonbracket{\Poissonbracket{ \derivedby{\derivescoaction}(x),\xi},\kappa};
$$
$$
\pairing{\domega(u)}{\kappa_1\wedge \kappa_2}=
\Poissonbracket{u,\Poissonbracket{\Poissonbracket{\derivescobracket,\kappa_1},\kappa_2}}
=\Poissonbracket{\Poissonbracket{\derivedby{\derivescobracket}(u),\kappa_1},\kappa_2},
$$
$\forall \xi\in\galgebrastar$, $\kappa,\kappa_1,\kappa_2\in\thetaalgebrastar$.

Since  the left hand side of Eq.~\eqref{Eqt:tempc} belongs to
 $(\symmetricproduct^2\galgebrastar)\symmetricproduct \galgebra\symmetricproduct \thetaalgebra+(\symmetricproduct^2\thetaalgebra)\symmetricproduct \galgebrastar\symmetricproduct \thetaalgebrastar$,
hence we have
\begin{eqnarray*}
 &&\Poissonbracket{\Poissonbracket{{\derivedby{\Poissonbracket{\derivesbracket,\derivescoaction} +\Poissonbracket{\derivesaction,\derivescobracket}
+\Poissonbracket{\derivesaction,\derivescoaction}}}(x,y),\xi},\kappa}\\
&=&\Poissonbracket{\Poissonbracket{\derivedby{\Poissonbracket{\derivesbracket,\derivescoaction}}(x,y) +\derivedby{\Poissonbracket{\derivesaction,\derivescoaction}}(x,y),\xi},\kappa}\\
&=&
\Poissonbracket{\Poissonbracket{\derivedby{\derivescoaction}(\ba{x}{y})
+\derivedby{\derivesbracket+\derivesaction}(\derivedby{\derivescoaction}(x),y)
-\derivedby{\derivesbracket+\derivesaction}(\derivedby{\derivescoaction}(y),x),\xi},\kappa}
\\
&=&\pairing{-\skewdelta\ba{x}{y}+\ba{x}{\skewdelta(y)}-\ba{y}{\skewdelta(x)}~}{~\xi\wedge \kappa};
\end{eqnarray*}
and
\begin{eqnarray*}
&&
\Poissonbracket{\Poissonbracket{\derivedby{\Poissonbracket{\derivesbracket,\derivescoaction} +\Poissonbracket{\derivesaction,\derivescobracket}
+\Poissonbracket{\derivesaction,\derivescoaction}}(x,u),\kappa_1},\kappa_2}
\\
&=&\Poissonbracket{\Poissonbracket{\derivedby{\Poissonbracket{\derivesaction,\derivescobracket}}(x,u) +\derivedby{\Poissonbracket{\derivesaction,\derivescoaction}}(x,u),\kappa_1},\kappa_2}\\
&=&\Poissonbracket{\Poissonbracket{\derivedby{\derivescobracket}( {x}\moduleaction {u})
+\derivedby{\derivesaction}(\derivedby{\derivescobracket}(u),x)
-\derivedby{\derivesaction}(\derivedby{\derivescoaction}(x),u),\kappa_1},\kappa_2}\\
&=&\pairing{-x \moduleaction\omega(u)+\omega( x \moduleaction
u)+\mathrm{Pr}_{\wedge^2\thetaalgebra}(\ba{u}{\skewdelta( x )})}{\kappa_1\wedge\kappa_2}.
\end{eqnarray*}

Therefore it follows that Eq. \eqref{Eqt:tempc} is equivalent to
that the pair $(\skewdelta,\domega)$ satisfies the two
compatibility  conditions in
 Proposition \ref{prop:liebialgebracrossedmoduledata}.
Hence we conclude that $ \{t, t\}=0 $ is equivalent to that the couple $\crossedmoduletriple{\thetaalgebra}{\phi}{\galgebra} $ and
$\crossedmoduletriple{\galgebrastar}{\phiUprotate
}{\thetaalgebrastar} $ is a Lie bialgebra crossed module.
\end{proof}

\begin{thm}
\label{Thm:mathchedpairiff}
Assume that $\crossedmoduletriple{\thetaalgebra}{\phi}{\galgebra}$ and
$\crossedmoduletriple{\galgebrastar}{\phiUprotate
}{\thetaalgebrastar}$  are Lie algebra crossed modules.  Then
they form    a Lie bialgebra crossed module if and only if
$(\galgebra,\thetaalgebrastar)$ is a matched pair of Lie algebras,
where the $\galgebra$-action on
$\thetaalgebrastar$ is the dual to the given $\galgebra$-action
on $\thetaalgebra$, while
the $\thetaalgebrastar$-action on $\galgebra$
is dual to the given $\thetaalgebrastar$-action on $\galgebrastar$.
\end{thm}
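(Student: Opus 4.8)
The plan is to derive the equivalence from Proposition~\ref{prop:liebialgebracrossedmoduledata}. Since both $\crossedmoduletriple{\thetaalgebra}{\phi}{\galgebra}$ and $\crossedmoduletriple{\galgebrastar}{\phiUprotate}{\thetaalgebrastar}$ are \emph{assumed} to be Lie algebra crossed modules, Definition~\ref{Defn:bi-crossedmoduleofLiealgebra} together with Proposition~\ref{prop:liebialgebracrossedmoduledata} tells us that they ``form a Lie bialgebra crossed module'' exactly when the pair $(\skewdelta,\domega)$ of Lemma~\ref{prop:coliealgebracrossedmoduledata} satisfies conditions~1) and~2) there. On the matched-pair side, $\galgebra$ and $\thetaalgebrastar$ are already Lie algebras, the given $\galgebra$-action on $\thetaalgebra$ dualizes to a $\galgebra$-action on $\thetaalgebrastar$, and the given $\thetaalgebrastar$-action on $\galgebrastar$ dualizes to a $\thetaalgebrastar$-action on $\galgebra$; so ``$(\galgebra,\thetaalgebrastar)$ is a matched pair'' amounts to nothing more than the two mixed compatibility identities of a matched pair. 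Hence the theorem reduces to the claim that conditions~1) and~2) of Proposition~\ref{prop:liebialgebracrossedmoduledata} \emph{are} those two matched-pair identities.

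To run the translation I would first make explicit the dictionary implicit in Lemma~\ref{prop:coliealgebracrossedmoduledata}. Up to the canonical pairings, $\skewdelta\colon\galgebra\to\galgebra\wedge\thetaalgebra$ is the transpose of the $\thetaalgebrastar$-action on $\galgebrastar$, so that contracting $\skewdelta(x)$ against an element of $\thetaalgebrastar$ returns, up to sign, the $\thetaalgebrastar$-action of that element on $x\in\galgebra$ --- precisely the $\thetaalgebrastar$-action on $\galgebra$ that enters the matched pair; likewise $\domega\colon\thetaalgebra\to\wedge^{2}\thetaalgebra$ is the transpose of the bracket $\bas{\cdot}{\cdot}$ on $\thetaalgebrastar$, and the $\galgebra$-action on $\thetaalgebra$ is transpose to the $\galgebra$-action on $\thetaalgebrastar$. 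I would also note that the ``structural'' hypotheses of Lemma~\ref{prop:coliealgebracrossedmoduledata} --- that $\skewdelta$ takes values in $W_{2}$ and that $\phipush\circ\domega=\skewdelta\circ\phi$ --- are the duals of Eqs.~\eqref{Eqn:tp4} and~\eqref{Eqn:tp3} and hold automatically because $\crossedmoduletriple{\galgebrastar}{\phiUprotate}{\thetaalgebrastar}$ is a crossed module, so they add nothing beyond the matched-pair requirements: once $(\galgebra,\thetaalgebrastar)$ is a matched pair, its interaction with $\phi$ is already forced.

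Next I would dualize conditions~1) and~2). Condition~1), that $\skewdelta$ is a Lie algebra $1$-cocycle of $\galgebra$ valued in $\galgebra\wedge\thetaalgebra$ --- with the $\galgebra$-module structure being the adjoint action on the $\galgebra$-slot and $\moduleaction$ on the $\thetaalgebra$-slot --- pairs against a bivector $\xi\wedge\kappa$ and, through the dictionary, becomes exactly the matched-pair axiom describing how $\thetaalgebrastar$ acts on brackets of $\galgebra$ (the identity for the action of $\kappa$ on $\ba{x}{y}$, cross-term and all). Condition~2), the identity $x\moduleaction\domega(u)-\domega(x\moduleaction u)=\mathrm{Pr}_{\wedge^{2}\thetaalgebra}\bigl(\ba{u}{\skewdelta(x)}\bigr)$, pairs against $\kappa_{1}\wedge\kappa_{2}$ and, under the dictionary ($\domega\leftrightarrow\bas{\cdot}{\cdot}$, $\moduleaction$ dual to the $\galgebra$-action on $\thetaalgebrastar$, $\skewdelta\leftrightarrow$ the $\thetaalgebrastar$-action on $\galgebra$), becomes exactly the complementary matched-pair axiom (the identity for the action of $x$ on $\bas{\kappa_{1}}{\kappa_{2}}$). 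Reading the same two computations backwards shows conversely that the two matched-pair identities imply conditions~1) and~2), so both implications follow.

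The main obstacle I anticipate is the sign and Koszul bookkeeping in the last step: keeping straight the antisymmetrization built into $W_{2}$, the sign in $\phiUprotate=-\phi^{*}$, and the coefficient $-\thalf$ in $\dpi=-\thalf\,\phipush\circ\skewdelta$ from Lemma~\ref{prop:coliealgebracrossedmoduledata}, along with the subsidiary check that the $\phi$-dependent relations really do follow from the matched-pair data rather than constrain it further. A cleaner conceptual packaging, which I would use to keep track of the structure, is that $(\Lietwo{\thetaalgebra}{\galgebra},\Lietwo{\galgebrastar}{\thetaalgebrastar})$ is a Lie bialgebra iff its Drinfeld double $\galgebra\oplus\thetaalgebra\oplus\galgebrastar\oplus\thetaalgebrastar$ is a Lie algebra; inside that double, the two semidirect-product structures force $\galgebra\oplus\thetaalgebrastar$ to be a half-dimensional subspace closed under the bracket, carrying precisely the bicrossed-product bracket $\galgebra\bowtie\thetaalgebrastar$ of $(\galgebra,\thetaalgebrastar)$ with the stated actions (and $[\thetaalgebra,\galgebrastar]=0$ there). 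The ``only if'' direction is then immediate --- a subalgebra of a Lie algebra is a Lie algebra --- while the ``if'' direction again comes down to the cocycle identity above, equivalently to the statement that, granting Eqs.~\eqref{Eqt:tempa} and~\eqref{Eqt:tempb}, the remaining Eq.~\eqref{Eqt:tempc} in the proof of Theorem~\ref{Thm:Liebialgebracm1-1strictLie2bialgebra} is the conjunction of the two matched-pair axioms.
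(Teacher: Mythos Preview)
Your proposal is correct and follows essentially the same route as the paper: reduce via Proposition~\ref{prop:liebialgebracrossedmoduledata} to conditions~1) and~2) on $(\skewdelta,\domega)$, then verify by direct dualization that condition~1) is the matched-pair identity for $\kappa\moduleaction\ba{x}{y}$ and condition~2) is the identity for $x\moduleaction\bas{\kappa_1}{\kappa_2}$. The paper simply writes out the two pairing computations explicitly (your ``tedious bookkeeping'') and omits the Drinfeld-double repackaging you sketch at the end, which is a nice conceptual supplement but not needed for the argument.
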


\begin{proof}
By  definition, $(\galgebra,\thetaalgebrastar)$ is
a matched pair of Lie algebras if and only if
\begin{align}\label{align:temp1}& \kappa \moduleaction \ba{{x}}{{y}}=\ba{{x}}{\kappa \moduleaction
{y}}-\ba{{y}}{\kappa \moduleaction {x}}+({y}\moduleaction
\kappa )\moduleaction {x}-({x}\moduleaction \kappa )\moduleaction {y},
\\
\label{align:temp2}&{x}\moduleaction
\bas{\kappa_1}{\kappa_2}=\bas{\kappa_1}{{x}\moduleaction
\kappa_2}-\bas{\kappa_2}{{x}\moduleaction \kappa_1}+(\kappa_2\moduleaction
{x})\moduleaction \kappa_1-(\kappa_1\moduleaction {x})\moduleaction \kappa_2,
\end{align}
for all ${x},{y}\in\galgebra$, $\kappa_1,\kappa_2\in \thetaalgebrastar$.

We prove that Eq. \eqref{align:temp1} is equivalent to $\skewdelta$
 being a Lie algebra $1$-cocycle, while   Eq. \eqref{align:temp2}
is equivalent to Condition (2) in Proposition \ref{prop:liebialgebracrossedmoduledata}.

Indeed, a tedious computation leads to
\begin{eqnarray*}
&&\pairing{\ddelta{(\ba{{x}}{{y}})}}{ \xi,\kappa }-
\pairing{{{x}}\moduleaction {\ddelta({y})}}{ \xi,\kappa }+
\pairing{{{y}}\moduleaction {\ddelta({x})}}{ \xi,\kappa } \\
&=&\pairing{\xi}{-\kappa \moduleaction \ba{{x}}{{y}}+\ba{{x}}{\kappa \moduleaction
{y}}-\ba{{y}}{\kappa \moduleaction {x}}+({y}\moduleaction
\kappa )\moduleaction {x}-({x}\moduleaction \kappa )\moduleaction {y}},
\end{eqnarray*}
$\forall$  $x\in\galgebra$, $u\in \thetaalgebra$, $ \xi\in \galgebrastar$, $\kappa \in\thetaalgebrastar$,
and 
\begin{eqnarray*}
&&
\pairing{\domega { ({x}\moduleaction u)}}{\kappa_1,\kappa_2}-
\pairing{{x}\moduleaction
\domega(u)}{\kappa_1,\kappa_2}+\pairing{\mathrm{Pr}_{\wedge^2\thetaalgebra}(\ba{u}{\skewdelta(  x )})}{\kappa_1,\kappa_2}\\
&=&\pairing{ u}{{x}\moduleaction
\bas{\kappa_1}{\kappa_2}+\bas{\kappa_2}{{x}\moduleaction
\kappa_1}-\bas{\kappa_1}{ {x}\moduleaction \kappa_2}-(\kappa_2\moduleaction
{x})\moduleaction \kappa_1+(\kappa_1\moduleaction {x})\moduleaction
\kappa_2}.
\end{eqnarray*}
This concludes the proof.
\end{proof}

\begin{cor}
Let $(\thetaalgebra,\thetaalgebrastar)$ be a
 Lie bialgebra.
 Assume that $\abelianideal$
is a subspace of the center   $Z(\thetaalgebra)$ ( i.e.
$\ba{\thetaalgebra}{\abelianideal}=0$) and  $\domega(\abelianideal)
\subset \wedge^2\abelianideal$, where $\domega: \theta\to \wedge^2 \theta$
is  the cobracket on $\thetaalgebra$.
 Then there is an induced Lie bialgebra crossed module structure
underlying $\crossedmoduletriple{\thetaalgebra}{\phi}{ \galgebra}$,
 where $\galgebra= \thetaalgebra/\abelianideal$ is the quotient Lie algebra
 and $\phi$ is the projection.
\end{cor}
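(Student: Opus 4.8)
The plan is to obtain the asserted structure by quotienting the Lie bialgebra crossed module $(\crossedmoduletriple{\thetaalgebra}{\Id}{\thetaalgebra},\crossedmoduletriple{\thetaalgebrastar}{-\Id}{\thetaalgebrastar})$ associated with $(\thetaalgebra,\thetaalgebrastar)$ in the example above (the trivial crossed module and its dual). First I would fix the data: since $\abelianideal$ is contained in $Z(\thetaalgebra)$ it is an ideal, so $\galgebra=\thetaalgebra/\abelianideal$ carries the quotient bracket; take $\phi\colon\thetaalgebra\to\galgebra$ to be the projection and let $\galgebra$ act on $\thetaalgebra$ by $\bar x\moduleaction u:=\ba{x}{u}$ for any lift $x$ of $\bar x$, which is well defined because $\abelianideal$ is central. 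The relations $\phi(\bar x\moduleaction u)=\ba{\bar x}{\phi(u)}$ and $\phi(u)\moduleaction v=-\phi(v)\moduleaction u$ are immediate, so by Proposition~\ref{Lem:ensentialcrossedmoduleLiealgebra} the triple $\crossedmoduletriple{\thetaalgebra}{\phi}{\galgebra}$ is a Lie algebra crossed module. Dually, $\galgebrastar$ identifies canonically with the annihilator $\abelianideal^{0}$ of $\abelianideal$ in $\thetaalgebrastar$ and $-\phi^*$ with minus the inclusion; dualizing the hypothesis $\domega(\abelianideal)\subseteq\wedge^2\abelianideal$ shows that $\abelianideal^{0}$ is a Lie ideal of $(\thetaalgebrastar,\bas{\cdot}{\cdot})$, whence the bracket and adjoint action of $\thetaalgebrastar$ restrict to $\abelianideal^{0}$ and (again by Proposition~\ref{Lem:ensentialcrossedmoduleLiealgebra}) make $\crossedmoduletriple{\galgebrastar}{-\phi^*}{\thetaalgebrastar}$ a Lie algebra crossed module.

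It then remains to check that $(\Lietwo{\thetaalgebra}{\galgebra},\Lietwo{\galgebrastar}{\thetaalgebrastar})$ is a Lie bialgebra, and I would do this by exhibiting it as a quotient rather than checking it directly. The subspace $\abelianideal$ sitting inside the base factor of $\Lietwo{\thetaalgebra}{\thetaalgebra}=\thetaalgebra\ltimes\thetaalgebra$ is a central ideal of that Lie algebra, and the quotient by it is precisely $\Lietwo{\thetaalgebra}{\galgebra}$ endowed with the crossed module data fixed above. By Lemma~\ref{prop:coliealgebracrossedmoduledata} applied to the trivial crossed module, the cobracket of $\Lietwo{\thetaalgebra}{\thetaalgebra}$ is $\partial(x+u)=\domega(u)+\skewdelta(x)+\dpi(x)$ with $\domega$ the cobracket of $\thetaalgebra$, $\skewdelta=\phipush\circ\domega$ and $\dpi=-\thalf\phipush\circ\skewdelta$; since $\domega(\abelianideal)\subseteq\wedge^2\abelianideal$ and $\phipush$ sends $\wedge^2\abelianideal$ into wedge products having at least one leg in $\abelianideal$, one checks that $\partial$ carries the distinguished copy of $\abelianideal$ into its own wedge product with all of $\Lietwo{\thetaalgebra}{\thetaalgebra}$. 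Hence that copy of $\abelianideal$ is a Lie bialgebra ideal, so $\Lietwo{\thetaalgebra}{\galgebra}$ inherits a Lie bialgebra structure whose dual is the sub-Lie-bialgebra of $\Lietwo{\thetaalgebrastar}{\thetaalgebrastar}$ cut out by the corresponding annihilator; a routine bookkeeping identifies this annihilator with $\Lietwo{\galgebrastar}{\thetaalgebrastar}$ and matches the two Lie bialgebra structures. By Definition~\ref{Defn:bi-crossedmoduleofLiealgebra}, $(\crossedmoduletriple{\thetaalgebra}{\phi}{\galgebra},\crossedmoduletriple{\galgebrastar}{-\phi^*}{\thetaalgebrastar})$ is then a Lie bialgebra crossed module.

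I expect the coideal verification of the preceding paragraph to be the only genuine obstacle: it is where both hypotheses $\abelianideal\subseteq Z(\thetaalgebra)$ and $\domega(\abelianideal)\subseteq\wedge^2\abelianideal$ are actually used, and it comes down to tracking how $\phipush$ interacts with the cobracket and with the semidirect-product bracket. If one prefers a hands-on argument once the two crossed modules are in place, the same obstruction resurfaces: one may invoke Theorem~\ref{Thm:mathchedpairiff} and check that $(\galgebra,\thetaalgebrastar)$ is a matched pair for the induced dual actions, or invoke Proposition~\ref{prop:liebialgebracrossedmoduledata} with $\domega$ the cobracket of $\thetaalgebra$ and $\skewdelta(\phi(u)):=\phipush(\domega(u))$, verifying that $\skewdelta$ is a $1$-cocycle and that the second compatibility condition holds --- both of which reduce, after unwinding, to the $1$-cocycle property of $\domega$ together with the intertwining identity $\phipush\circ\ad_x=\ad_{\phi(x)}\circ\phipush$, itself verified on generators using the crossed module relations. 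Every other step is formal.
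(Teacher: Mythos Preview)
Your argument is correct, and the alternative you mention at the end --- invoking Theorem~\ref{Thm:mathchedpairiff} --- is exactly the paper's route. After setting up the two crossed modules just as you do, the paper applies Theorem~\ref{Thm:mathchedpairiff} and shows that $(\galgebra,\thetaalgebrastar)$ is a matched pair by observing that $(\thetaalgebra,\thetaalgebrastar)$ already is one (being a Lie bialgebra) and then checking that $\abelianideal\oplus 0$ is an ideal of the Drinfeld double $\thetaalgebra\bowtie\thetaalgebrastar$: for $u\in\abelianideal$ and $\kappa\in\thetaalgebrastar$ one has $\coadjoint{u}\kappa=0$ since $u$ is central, and $\coadjoint{\kappa}u\in\abelianideal$ from $\domega(\abelianideal)\subset\wedge^2\abelianideal$, whence $[u,\kappa]_D\in\abelianideal\oplus 0$. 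Your primary route is genuinely different: rather than passing through the double, you work inside the semidirect-product Lie bialgebra $\thetaalgebra\ltimes\thetaalgebra$ attached to the trivial crossed module and exhibit the base copy of $\abelianideal$ as a Lie bialgebra ideal, so that the quotient is $\galgebra\ltimes\thetaalgebra$ with the desired structure. This has the conceptual advantage of realising the corollary as a quotient of a known Lie bialgebra crossed module, but it costs you the bookkeeping of matching the quotient cobracket against the one prescribed by Lemma~\ref{prop:coliealgebracrossedmoduledata} and identifying the annihilator sub-bialgebra with $\Lietwo{\galgebrastar}{\thetaalgebrastar}$; the paper's two-line ideal check in the double is shorter precisely because Theorem~\ref{Thm:mathchedpairiff} has already absorbed that identification.
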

\begin{proof}
Identifying  $\galgebrastar$ with $\abelianideal^0\subset \thetaalgebrastar$,
we see that    $\galgebrastar$ is an ideal of $\thetaalgebrastar$,
 and the map $\phiUprotate:~\galgebrastar\lon \thetaalgebrastar$
 is the composition of  the inclusion with $-I$.
Hence  $\crossedmoduletriple{\galgebrastar}{\phiUprotate}{ \thetaalgebrastar}$
is a  Lie algebra crossed module.
To prove that
($\crossedmoduletriple{\thetaalgebra}{\phi}{ \galgebra}$,
$\crossedmoduletriple{\galgebrastar}{\phiUprotate}{ \thetaalgebrastar}$)
 is a Lie bialgebra crossed module, it suffices to
prove that $(\galgebra,\thetaalgebrastar)$ is a matched pair of
Lie algebras according to Theorem \ref{Thm:mathchedpairiff}.
Note that $(\thetaalgebra,\thetaalgebrastar)$ is a Lie algebra matched pair
since it is a  Lie bialgebra. Therefore it remains to show that
 $\abelianideal\oplus 0$ is an ideal of the double Lie algebra
$D=\thetaalgebra\bowtie\thetaalgebrastar$. In fact,
for any $u\in \abelianideal$ and
 $\kappa\in\thetaalgebrastar$, we have
 $\coadjoint{u}\kappa=0$ and $\coadjoint{\kappa}u\in \abelianideal$.
 Hence $[u,\kappa]_D=\coadjoint{u}\kappa-\coadjoint{\kappa}u~\in \abelianideal\oplus 0$.

This concludes the proof.
\end{proof}

\begin{ex}
Consider   the Lie subalgebra $ \mathrm{u}(n)\subset \mathfrak{gl}_n(\CC)$
of $n\times n$ skew-Hermitian matrices.
Let  ${\thetaalgebra}\subset \mathfrak{gl}_n(\CC) $
be the Lie subalgebra consisting  of
upper triangular matrices whose diagonal elements are real numbers.
It is standard
 that $(\thetaalgebra,\mathrm{u}(n))$ is a Lie bialgebra.
   Indeed $\thetaalgebra\oplus\mathrm{u}(n)\cong \mathfrak{gl}_n(\CC)$,
and both $\thetaalgebra$ and $\mathfrak{gl}_n(\CC)$ are lagrangian
subalgebras of $\mathfrak{gl}_n(\CC)$ under the nondegenerate
pairing $\pairing{X}{Y}=\mathrm{Im}(\mathrm{Tr}(XY))$, $\forall$ $X,Y\in \mathfrak{gl}_n(\CC)$.
  Hence $(\mathfrak{gl}_n(\CC),{\thetaalgebra},{\mathrm{u}(n)})$ is a Manin triple and thus
  $(\thetaalgebra,\mathrm{u}(n))$  forms a Lie bialgebra.

Let  $\abelianideal=\reals I$. It is clear that $\abelianideal$
 is the center of $\thetaalgebra$ and  $\domega(\abelianideal)=0$.
Hence $\galgebra=\thetaalgebra/\abelianideal$ can be identified with  the Lie algebra
of traceless upper triangular matrices whose diagonal elements are real.
As a consequence,
 $\crossedmoduletriple{\thetaalgebra}{\phi}{ \galgebra}$,
where $\phi$ is the map $A\to A-\text{tr}A$, is a Lie bialgebra crossed module.
\end{ex}

\begin{bibdiv}
\begin{biblist}

\bib{MR2068522}{article}{
   author={Baez, John C.},
   author={Crans, Alissa S.},
   title={Higher-dimensional algebra. VI. Lie 2-algebras},
   journal={Theory Appl. Categ.},
   volume={12},
   date={2004},
   pages={492--538 (electronic)},
   issn={1201-561X},
   review={\MR{2068522 (2005m:17039)}},
}

\bib{MR2068521}{article}{
   author={Baez, John C.},
   author={Lauda, Aaron D.},
   title={Higher-dimensional algebra. V. 2-groups},
   journal={Theory Appl. Categ.},
   volume={12},
   date={2004},
   pages={423--491 (electronic)},
   issn={1201-561X},
   review={\MR{2068521 (2005m:18005)}},
}

\bib{BSZ}{article}{
   author={Bai,Chengming},
   author={Sheng,Yunhe},
      author={Zhu, Chenchang},
   title={Lie 2-bialgebras},
   type={preprint},
   eprint={1109.1344},
}

\bib{CSX}{article}{
   author={Chen, Zhuo},
   author={Sti\'enon, Mathieu},
      author={Xu, Ping},
   title={Poisson 2-groups},
   journal={J. Diff. Geom.},
   volume={(to appear)},
   eprint={1202.0079},
}

\bib{MR934283}{article}{
   author={Drinfel{\cprime}d, V. G.},
   title={Quantum groups},
   conference={
      title={},
      address={Berkeley, Calif.},
      date={1986},
   },
   book={
      publisher={Amer. Math. Soc.},
      place={Providence, RI},
   },
   date={1987},
   pages={798--820},
   review={\MR{934283 (89f:17017)}},
}

\bib{David0507396v1}{article}{
    title = {Universal lifting theorem and quasi-Poisson groupoids},
    author = {Ponte, David Iglesias},
    author = {Laurent-Gengoux, Camille},
    author = {Xu, Ping},
    journal={J. Eur. Math. Soc.},
    volume={14},
    date={2012},
    pages={681-731},
    review={MR2911881},
}

\bib{MR2327020}{article}{
   author={Kravchenko, Olga},
   title={Strongly homotopy Lie bialgebras and Lie quasi-bialgebras},
   journal={Lett. Math. Phys.},
   volume={81},
   date={2007},
   number={1},
   pages={19--40},
   issn={0377-9017},
   review={\MR{2327020 (2008e:17020)}},
   doi={10.1007/s11005-007-0167-x},
}

\bib{MR2103012}{article}{
   author={Kosmann-Schwarzbach, Yvette},
   title={Quasi, twisted, and all that$\ldots$in Poisson geometry and Lie
   algebroid theory},
   conference={
      title={The breadth of symplectic and Poisson geometry},
   },
   book={
      series={Progr. Math.},
      volume={232},
      publisher={Birkh\"auser Boston},
      place={Boston, MA},
   },
   date={2005},
   pages={363--389},
   review={\MR{2103012 (2005g:53157)}},
}

\bib{MR1086516}{article}{
   author={Kosmann-Schwarzbach, Yvette},
   title={Grand crochet, crochets de Schouten et cohomologies d'alg\`ebres
   de Lie},
   language={French, with English summary},
   journal={C. R. Acad. Sci. Paris S\'er. I Math.},
   volume={312},
   date={1991},
   number={1},
   pages={123--126},
   issn={0764-4442},
   review={\MR{1086516 (92b:17028)}},
}

\bib{MR1188453}{article}{
   author={Kosmann-Schwarzbach, Yvette},
   title={Jacobian quasi-bialgebras and quasi-Poisson Lie groups},
   conference={
      title={Mathematical aspects of classical field theory (Seattle, WA,
      1991)},
   },
   book={
      series={Contemp. Math.},
      volume={132},
      publisher={Amer. Math. Soc.},
      place={Providence, RI},
   },
   date={1992},
   pages={459--489},
   review={\MR{1188453 (94b:17025)}},
}

\bib{MR1235010}{article}{
   author={Lada, Tom},
   author={Stasheff, Jim},
   title={Introduction to SH Lie algebras for physicists},
   journal={Internat. J. Theoret. Phys.},
   volume={32},
   date={1993},
   number={7},
   pages={1087--1103},
   issn={0020-7748},
   review={\MR{1235010 (94g:17059)}},
   doi={10.1007/BF00671791},
}

\bib{MR1046522}{article}{
   author={Lecomte, Pierre B. A.},
   author={Roger, Claude},
   title={Modules et cohomologies des big\`ebres de Lie},
   language={French, with English summary},
   journal={C. R. Acad. Sci. Paris S\'er. I Math.},
   volume={310},
   date={1990},
   number={6},
   pages={405--410},
   issn={0764-4442},
   review={\MR{1046522 (91c:17013)}},
}

\bib{MR1697617}{article}{
   author={Mackenzie, Kirill C. H.},
   title={On symplectic double groupoids and the duality of Poisson
   groupoids},
   journal={Internat. J. Math.},
   volume={10},
   date={1999},
   number={4},
   pages={435--456},
   issn={0129-167X},
   review={\MR{1697617 (2000g:58029)}},
   doi={10.1142/S0129167X99000185},
}

\bib{MR1746902}{article}{
   author={Mackenzie, Kirill C. H.},
   author={Xu, Ping},
   title={Integration of Lie bialgebroids},
   journal={Topology},
   volume={39},
   date={2000},
   number={3},
   pages={445--467},
   issn={0040-9383},
   review={\MR{1746902 (2001b:53104)}},
   doi={10.1016/S0040-9383(98)00069-X},
}

\bib{MR2200884}{article}{
   author={Merkulov, S. A.},
   title={PROP profile of Poisson geometry},
   journal={Comm. Math. Phys.},
   volume={262},
   date={2006},
   number={1},
   pages={117--135},
   issn={0010-3616},
   review={\MR{2200884 (2006j:53122)}},
   doi={10.1007/s00220-005-1385-7},
}

\bib{MR2600029}{article}{
   author={Merkulov, S. A.},
   title={Wheeled Pro(p)file of Batalin-Vilkovisky formalism},
   journal={Comm. Math. Phys.},
   volume={295},
   date={2010},
   number={3},
   pages={585--638},
   issn={0010-3616},
   review={\MR{2600029 (2011d:17038)}},
   doi={10.1007/s00220-010-0987-x},
}

\bib{MR1460632}{article}{
   author={Mokri, Tahar},
   title={Matched pairs of Lie algebroids},
   journal={Glasgow Math. J.},
   volume={39},
   date={1997},
   number={2},
   pages={167--181},
   issn={0017-0895},
   review={\MR{1460632 (99a:58165)}},
   doi={10.1017/S0017089500032055},
}

\bib{MR1958834}{article}{
   author={Voronov, Theodore},
   title={Graded manifolds and Drinfeld doubles for Lie bialgebroids},
   conference={
      title={Quantization, Poisson brackets and beyond},
      address={Manchester},
      date={2001},
   },
   book={
      series={Contemp. Math.},
      volume={315},
      publisher={Amer. Math. Soc.},
      place={Providence, RI},
   },
   date={2002},
   pages={131--168},
   review={\MR{1958834 (2004f:53098)}},
}

\bib{MR0017537}{article}{
   author={Whitehead, J. H. C.},
   title={Note on a previous paper entitled ``On adding relations to
   homotopy groups.''},
   journal={Ann. of Math. (2)},
   volume={47},
   date={1946},
   pages={806--810},
   issn={0003-486X},
   review={\MR{0017537 (8,167a)}},
}

\bib{MR0030760}{article}{
   author={Whitehead, J. H. C.},
   title={Combinatorial homotopy. II},
   journal={Bull. Amer. Math. Soc.},
   volume={55},
   date={1949},
   pages={453--496},
   issn={0002-9904},
   review={\MR{0030760 (11,48c)}},
}



\end{biblist}
\end{bibdiv}

\end{document}